\newtheorem{theorem}{Theorem}
\newtheorem{lemma}[theorem]{Lemma}
\newtheorem{prop}[theorem]{Proposition}
\theoremstyle{definition}
\newtheorem{remark}[theorem]{Remark}
\newcommand{\tinyspace}{\mspace{1mu}}
\newcommand{\microspace}{\mspace{0.5mu}}
\newcommand{\negsmallspace}{\mspace{-1.5mu}}
\newcommand{\op}[1]{\operatorname{#1}}
\newcommand{\tr}{\operatorname{Tr}}
\newcommand{\pt}{\operatorname{T}}
\renewcommand{\int}{\operatorname{int}}
\renewcommand{\vec}{\operatorname{vec}}
\renewcommand{\t}{{\scriptscriptstyle\mathsf{T}}}
\newcommand{\ip}[2]{\langle #1 , #2\rangle}
\newcommand{\bigip}[2]{\bigl\langle #1, #2 \bigr\rangle}
\newcommand{\norm}[1]{\lVert\tinyspace #1 \tinyspace\rVert}
\newcommand{\ket}[1]{
  \lvert\microspace #1 \microspace \rangle}
\newcommand{\bra}[1]{
  \langle\microspace #1 \microspace \rvert}
\def\I{\mathbb{1}}
\newcommand{\setft}[1]{\mathrm{#1}}
\newcommand{\Density}{\setft{D}}
\newcommand{\Pos}{\setft{Pos}}
\newcommand{\PPT}{\setft{PPT}}
\newcommand{\Unitary}{\setft{U}}
\newcommand{\Herm}{\setft{Herm}}
\newcommand{\Lin}{\setft{L}}
\newcommand{\Sep}{\setft{Sep}}
\newcommand{\BPos}{\setft{Sep}^{\ast}}
\def\complex{\mathbb{C}}
\def\real{\mathbb{R}}
\newenvironment{mylist}[1]{\begin{list}{}{
	\setlength{\leftmargin}{#1}
	\setlength{\rightmargin}{0mm}
	\setlength{\labelsep}{2mm}
	\setlength{\labelwidth}{8mm}
	\setlength{\itemsep}{0mm}}}
	{\end{list}}
\def\X{\mathcal{X}}
\def\Y{\mathcal{Y}}
\def\Z{\mathcal{Z}}
\def\W{\mathcal{W}}
\def\A{\mathcal{A}}
\def\B{\mathcal{B}}
\def\P{\mathcal{P}}
\def\S{\mathcal{S}}
\def\K{\mathcal{K}}
\def\eps{\varepsilon}
\let\@fnsymbol\@arabic
\begin{document}

\title{\Large\bf Limitations on separable measurements by convex optimization}

\author{Somshubhro Bandyopadhyay\thanks{
    Department of Physics and Center for Astroparticle Physics and
    Space Science, Bose Institute, India.} 
  \and
  Alessandro Cosentino\thanks{
    Institute for Quantum Computing, University of Waterloo, Canada.}
  \textsuperscript{ ,}\thanks{
    School of Computer Science, University of Waterloo, Canada.}
  \and
  Nathaniel Johnston\footnotemark[2]
  \textsuperscript{ ,}\thanks{
    Department of Combinatorics and Optimization, University of Waterloo,
    Canada.}\vspace{0.5mm}
  \and
  Vincent Russo \footnotemark[2]
  \textsuperscript{ ,}\footnotemark[3]
  \and
  John Watrous \footnotemark[2]
  \textsuperscript{ ,}\footnotemark[3]
  \textsuperscript{ ,}\thanks{Canadian Institute for
    Advanced Research, Toronto, Canada.}
  \and
  Nengkun Yu\footnotemark[2]
  \textsuperscript{ ,}\footnotemark[4]
  \textsuperscript{ ,}\thanks{Department of Mathematics and Statistics, 
    University of Guelph, Canada.}
}

\date{August 29, 2014\\[2mm](with minor corrections on January 4, 2021)}

\maketitle

\begin{abstract}
  We prove limitations on LOCC and separable measurements in bipartite
  state discrimination problems using techniques from convex optimization.
  Specific results that we prove include: an exact formula for the optimal 
  probability of correctly discriminating any set of either three or four Bell
  states via LOCC or separable measurements when the parties are given an
  ancillary partially entangled pair of qubits;
  an easily checkable characterization of when an unextendable product set is
  perfectly discriminated by separable measurements, along with the first known
  example of an unextendable product set that cannot be perfectly discriminated
  by separable measurements;
  and an optimal bound on the success probability for any LOCC or separable
  measurement for the recently proposed state discrimination problem of Yu,
  Duan, and Ying.
\end{abstract}

\section{Introduction}
\label{sec:intro}

The paradigm of \emph{local operations and classical communication}, or
\emph{LOCC} for short, is fundamental within the theory of quantum
information.
A protocol involving two or more individuals is said to be an 
\emph{LOCC protocol} when it may be implemented by means of classical
communication among the individuals, along with arbitrary quantum operations
performed locally.
This paradigm serves as a foundation through which properties of entanglement
may be studied, particularly those connected with the notion of entanglement as
a resource for information processing.

\subsubsection*{State discrimination problems}

One basic problem concerning the LOCC paradigm that has been studied in depth
regards the discrimination of sets of bipartite (or multipartite) states by
means of measurements that can be implemented by LOCC protocols.
In the most typically considered variant of this problem, one first specifies
an ensemble of states
\begin{equation}
  \bigl\{ (p_1,\rho_1),\ldots,(p_N,\rho_N)\bigr\},
\end{equation}
where $N$ is a positive integer, $(p_1,\ldots,p_N)$ is a probability vector,
and $\rho_1,\ldots,\rho_N$ are density operators representing quantum states
of systems shared between two separate individuals: Alice and Bob.
With respect to the probability vector $(p_1,\ldots,p_N)$, an index
$k\in\{1,\ldots,N\}$ is selected at random, and Alice and Bob are given the
quantum state $\rho_k$ for the selected index $k$.
Their goal is to determine the index $k$ of the given state~$\rho_k$
by means of an LOCC measurement.

In most prior works on this problem, the probability vector $(p_1,\ldots,p_N)$
has been taken to be $(1/N,\ldots,1/N)$, representing a uniform probability
distribution, and the states $\rho_1,\ldots,\rho_N$ have been taken to be pure
and orthogonal, so that a global measurement can trivially discriminate them
with certainty.
Many examples are known of specific choices of pure, orthogonal states
$\rho_1,\ldots,\rho_N$ for which a perfect discrimination is not possible
through LOCC measurements.
Some of these examples, along with other general results concerning this
problem, may be found in 
\cite{
  BandyopadhyayGK11,
  BandyopadhyayN2013,
  BandyopadhyayW2009,
  BennettDFMRSSW99,
  ChenL03,
  DuanFXY09,
  Fan04,
  GhoshKRS01,
  GhoshKRS04,
  HayashiMMOV06,
  HorodeckiSSH03,
  Nathanson05,
  Nathanson2013,
  WalgateH02,
  WalgateSHV00,
  Watrous05,
  YuDY12,
  YuDY2014}.

As perhaps the simplest example of an instance of this problem where a perfect
LOCC discrimination is not possible, one has that four Bell states,
\begin{equation} \label{eq:Bell-states}
  \begin{aligned}
    \ket{\phi_1} & = \frac{1}{\sqrt{2}}\ket{0}\ket{0} 
    + \frac{1}{\sqrt{2}}\ket{1}\ket{1}, & \qquad
    \ket{\phi_2} & = \frac{1}{\sqrt{2}}\ket{0}\ket{0} 
    - \frac{1}{\sqrt{2}}\ket{1}\ket{1},\\
    \ket{\phi_3} & = \frac{1}{\sqrt{2}}\ket{0}\ket{1} 
    + \frac{1}{\sqrt{2}}\ket{1}\ket{0}, & \qquad
    \ket{\phi_4} & = \frac{1}{\sqrt{2}}\ket{0}\ket{1} 
    - \frac{1}{\sqrt{2}}\ket{1}\ket{0},
  \end{aligned}
\end{equation}
cannot be perfectly discriminated by LOCC measurements \cite{GhoshKRS01}.
More precisely, if one takes $N = 4$, $\rho_k = \ket{\phi_k}\bra{\phi_k}$,
and $p_k = 1/4$ for each $k\in\{1,\ldots,4\}$ in the above problem, it holds
that the optimal probability with which Alice and Bob can correctly determine
the chosen index $k \in\{1,\ldots,4\}$, assuming that they are restricted to
local operations and classical communication, is $1/2$.
The fact that Alice and Bob can achieve a success probability of 1/2 is
straightforward: if they both measure their qubit with respect to the standard
basis, compare the results through classical communication, and answer
$k = 1$ if the measurements agree and $k=3$ if the measurements disagree, they
will be correct with probability 1/2.
The fact that they cannot achieve a probability of correctness larger than 1/2
follows from a result of Nathanson \cite{Nathanson05} stating that $N$
equiprobable, \emph{maximally entangled}, bipartite states having local
dimension $n$ can be discriminated correctly with probability at most $n/N$.
Consequently, even in the variant of this example in which $N=3$, 
$\rho_k = \ket{\phi_k}\bra{\phi_k}$, and $p_k = 1/3$ for each $k\in\{1,2,3\}$,
Alice and Bob cannot achieve a probability of correctness larger than 2/3,
which again is achievable through the simple protocol described above.
(More generally, if the states $\rho_1,\rho_2,\rho_3,\rho_4$ are given with
probabilities $p_1\geq p_2\geq p_3\geq p_4$, the optimal probability of a
correct discrimination by an LOCC measurement is $p_1 + p_2$
\cite{BandyopadhyayN2013}.)

Among the other known examples of collections of orthogonal pure states that
cannot be perfectly discriminated by LOCC protocols, the so-called \emph{domino
  state} example of \cite{BennettDFMRSSW99} is noteworthy.
In this example, the local dimension of the states is $3$, and one takes
$N = 9$, $p_1 = \cdots = p_9 = 1/9$, and
{\setlength{\arraycolsep}{1.2mm}%
\begin{equation} \label{eq:domino}
\begin{array}{llll}
  \multicolumn{4}{c}{\ket{\phi_1} = \ket{1}\ket{1},}\\[4mm]
  \ket{\phi_2} = \ket{0}\left(\frac{\ket{0} + \ket{1}}{\sqrt{2}}\right),
  & \ket{\phi_3} = \ket{2}\left(\frac{\ket{1} + \ket{2}}{\sqrt{2}}\right),
  & \ket{\phi_4} = \left(\frac{\ket{1} + \ket{2}}{\sqrt{2}}\right)\ket{0},
  & \ket{\phi_5} = \left(\frac{\ket{0} + \ket{1}}{\sqrt{2}}\right)\ket{2},
  \\[4mm]
  \ket{\phi_6} = \ket{0}\left(\frac{\ket{0} - \ket{1}}{\sqrt{2}}\right),
  & \ket{\phi_7} = \ket{2}\left(\frac{\ket{1} - \ket{2}}{\sqrt{2}}\right),
  & \ket{\phi_8} = \left(\frac{\ket{1} - \ket{2}}{\sqrt{2}}\right)\ket{0},
  & \ket{\phi_9} = \left(\frac{\ket{0} - \ket{1}}{\sqrt{2}}\right)\ket{2}.
\end{array}
\end{equation}
}%
A rather complicated argument demonstrates that this collection cannot be
discriminated with probability greater than $1 - \varepsilon$ for some choice
of a positive real number $\varepsilon$.
(A somewhat simplified proof appears in \cite{ChildsLMO13}, where this fact
is proved for $\varepsilon = 1.9 \times 10^{-8}$.)
The particular relevance of this example lies in the fact that all of these
states are product states, demonstrating that entanglement is not a requisite
for a set of orthogonal pure states to fail to be perfectly discriminated by
any LOCC measurement.
Complementary to this observation, a fundamental result of Walgate, Short,
Hardy, and Vedral \cite{WalgateSHV00} states that any two orthogonal pure
states (whether entangled or not) can always be perfectly discriminated by an
LOCC measurement.

\subsubsection*{Separable measurements}

The set of measurements that can be implemented through LOCC has an apparently
complex mathematical structure---no tractable characterization of this set is
known, representing a clear obstacle to a better understanding of the
limitations of LOCC measurements.
For example, given a collection of measurement operators $\{P_1,\ldots,P_N\}$
describing a measurement on a bipartite system, the determination of whether or
not this collection describes an LOCC measurement, or is closely approximated
by an LOCC measurement, is not known to be a computationally decidable
problem.

For this reason, the state discrimination problem described above is sometimes
considered for more tractable classes of measurements that approximate, in some
sense, the LOCC measurements.
The class of \emph{separable measurements} represents one commonly studied
approximation in this category.
Let us assume hereafter that $\X = \complex^n$ and $\Y = \complex^m$ are
complex Euclidean spaces (or, equivalently, finite-dimensional complex Hilbert
spaces) representing the local systems of Alice and Bob, respectively, in a
state discrimination problem.
A positive semidefinite operator $P\in\Pos(\X\otimes\Y)$ is said to be
\emph{separable} if it is possible to write
\begin{equation}
  P = \sum_{k = 1}^M Q_k \otimes R_k
\end{equation}
for some choice of a positive integer $M$ and positive semidefinite operators
$Q_1,\ldots,Q_M \in \Pos(\X)$ and $R_1,\ldots,R_M\in\Pos(\Y)$;
and a measurement $\{P_1,\ldots,P_N\}$ on $\X\otimes\Y$ is said to be a
separable measurement if it is the case that each measurement operator $P_k$ is
separable.
(Here, and elsewhere in the paper, $\Pos(\X)$, $\Pos(\Y)$, and
$\Pos(\X\otimes\Y)$ denote the sets of positive semidefinite operators acting
on $\X$, $\Y$, and $\X\otimes\Y$, respectively.)

It is straightforward to prove that every measurement that can be implemented
through LOCC is necessarily a separable measurement, from which it follows that
any limitation proved to hold for every separable measurement must also hold
for every LOCC measurement.
There are, however, separable measurements that cannot be implemented by LOCC
protocols; a measurement with respect to the orthonormal basis given by the
domino state example \eqref{eq:domino} is the archetypal example.

Many of the known results that establish limitations on LOCC measurements for
state discrimination tasks hold more generally for separable measurements, and
may be proved within this somewhat simpler setting.
For instance, the result of Nathanson mentioned earlier establishes that, in
the bipartite setting in which $n = m$ (i.e., Alice and Bob's local systems are
represented by $\X = \Y = \complex^n$), one has that $N$ equiprobable, maximally
entangled states cannot be discriminated with success probability exceeding
$n/N$ by any separable measurement.

\subsubsection*{PPT measurements}

Another class of measurements, representing a further relaxation of the LOCC
condition, is the class of PPT measurements.
A positive semidefinite operator $P\in \Pos(\X\otimes\Y)$ is a \emph{PPT}
(short for \emph{positive partial transpose}) operator if it holds that
\begin{equation}
  \pt_{\negsmallspace\X}(P) \in \Pos(\X\otimes\Y),
\end{equation}
where $\pt_{\negsmallspace\X}:\Lin(\X\otimes\Y)\rightarrow\Lin(\X\otimes\Y)$ is
the linear mapping representing partial transposition with respect to the
standard basis $\{\ket{0},\ldots,\ket{n-1}\}$ of $\X$.
A measurement is a \emph{PPT measurement} if it is represented by a collection
of PPT measurement operators $\{P_1,\ldots,P_N\}$.
Every separable operator is a PPT operator, so that every separable measurement
(and therefore every LOCC measurement) is a PPT measurement as well. 

The primary appeal of the set of PPT measurements is its mathematical
simplicity.
In particular, the PPT condition is represented by linear and positive
semidefinite constraints, which allows for an optimization over the collection
of PPT measurements to be represented by a semidefinite program.
By the duality theory of semidefinite programs, one may obtain upper bounds on
the success probability of any PPT measurement (and therefore any LOCC or
separable measurement) for a given state discrimination problem;
this may be done by simply exhibiting a feasible solution to the dual problem 
of the semidefinite program representing the measurement optimization for this
set of states.
The downside of this approach is that the set of PPT measurements is a
coarse approximation to the set of LOCC measurements, so the method will fail
to prove strong impossibility results for LOCC measurements in many cases.

The approach described above, in which PPT measurements are represented by
semidefinite programs, was taken in \cite{Cosentino13}.
There, it was shown that the state discrimination problem of Yu, Duan, and
Ying \cite{YuDY12}, to discriminate the four maximally entangled states
\begin{equation}
  \label{eq:ydy_states}
  \begin{aligned}
    \ket{\phi_1} & =
    \frac{1}{2} \ket{0}\ket{0} + 
    \frac{1}{2} \ket{1}\ket{1} + 
    \frac{1}{2} \ket{2}\ket{2} + 
    \frac{1}{2} \ket{3}\ket{3} \\[2mm]
    \ket{\phi_2} & =
    \frac{1}{2} \ket{0}\ket{3} + 
    \frac{1}{2} \ket{1}\ket{2} + 
    \frac{1}{2} \ket{2}\ket{1} + 
    \frac{1}{2} \ket{3}\ket{0} \\[2mm]
    \ket{\phi_3} & =
    \frac{1}{2} \ket{0}\ket{3} +
    \frac{1}{2} \ket{1}\ket{2} - 
    \frac{1}{2} \ket{2}\ket{1} - 
    \frac{1}{2} \ket{3}\ket{0} \\[2mm]
    \ket{\phi_4} & =
    \frac{1}{2} \ket{0}\ket{1} + 
    \frac{1}{2} \ket{1}\ket{0} -
    \frac{1}{2} \ket{2}\ket{3} - 
    \frac{1}{2} \ket{3}\ket{2}
  \end{aligned}
\end{equation}
by means of a PPT measurement, assuming the states are given with equal
probability, is at most 7/8.
(Yu, Duan, and Ying proved that this set cannot be perfectly discriminated by
any PPT measurement through a different argument, giving the first example
where $N=n$ maximally entangled states having local dimension $n$ cannot be
perfectly discriminated by LOCC measurements.
Examples of sets of $N<n$ states that cannot be discriminated without error
by PPT measurements were later given in \cite{CosentinoR13}.)

\subsubsection*{Overview of results}

This paper proves several new results concerning state discrimination by LOCC,
separable, and PPT measurements using techniques based on convex optimization.
Our results are primarily focused on separable measurements, and are mostly
based on the paradigm of \emph{cone programming}, which is a generalization of
linear programming and semidefinite programming that allows for optimizations
over general closed, convex cones.
We also obtain new results based on linear programming and semidefinite
programming.
The notion of \emph{duality}, shared by linear programs, semidefinite programs,
and cone programs, plays a central role in our results.
The following specific results are among those we prove:
\begin{mylist}{\parindent}
\item[$\bullet$]
  We obtain an exact formula for the optimal probability of correctly
  discriminating any set of either three or four Bell states via separable
  measurements, when the parties are given an ancillary partially entangled
  pair of qubits.
  In particular, it is proved that this ancillary pair of qubits must be
  maximally entangled in order for three Bell states to be perfectly
  discriminated by separable (or LOCC) measurements, which answers an open
  question of \cite{YuDY2014}.

\item[$\bullet$]
  We provide an easily checkable characterization of when an unextendable 
  product set is perfectly discriminated by separable measurements, and we 
  use this characterization to present an example of an unextendable product
  set in $\X\otimes\Y$, for $\X  = \Y = \complex^4$, that is not perfectly 
  discriminated by separable measurements.
  This resolves an open question raised in \cite{DuanFXY09}.
  We also show that every unextendable product set together with one extra pure
  state orthogonal to every member of the unextendable product set is not
  perfectly discriminated by separable measurements.

\item[$\bullet$]
  It is proved that the maximum success probability for any separable
  measurement in the state discrimination problem of Yu, Duan, and Ying
  specified above is 3/4.
  This bound is easily seen to be achievable by an LOCC measurement, implying
  that it is the optimal success probability of an LOCC measurement for this
  problem.
  The upper-bound is closely connected to the positive maps of Breuer and Hall
  \cite{Breuer06,Hall06}.
\end{mylist}

\section{A cone program for optimizing over separable measurements}
\label{sec:cone}

A \emph{cone program} (also known as a \emph{conic program}) expresses the
maximization of a linear function over the intersection of an affine subspace
and a closed convex cone in a finite-dimensional real inner product
space \cite{BV04}.
Linear programming and semidefinite programming are special cases of cone
programming: in linear programming, the closed convex cone over which the
optimization occurs is the positive orthant in $\real^n$, while in semidefinite
programming the optimization is over the cone $\Pos(\complex^n)$ of positive
semidefinite operators on $\complex^n$.
In the case of semidefinite programming, the finite-dimensional real inner
product space is the real vector space $\Herm(\complex^n)$ of Hermitian
operators on $\complex^n$, equipped with the Hilbert-Schmidt inner product:
$\ip{X}{Y} = \tr(XY)$.
One may also consider semidefinite programming over real positive semidefinite
operators.

\subsubsection*{Cone programming}

For the purposes of the present paper, it is sufficient to consider only cone
programs defined over spaces of Hermitian operators (with the Hilbert-Schmidt
inner product).
In particular, let $\Z$ and $\W$ be complex Euclidean spaces, let $\Herm(\Z)$
and $\Herm(\W)$ denote the sets of Hermitian operators acting on $\Z$ and $\W$,
respectively, and let $\K\subseteq\Herm(\Z)$ be a closed, convex cone.
For any choice of a linear map $\Phi:\Herm(\Z)\rightarrow\Herm(\W)$ and
Hermitian operators $A\in\Herm(\Z)$ and $B\in\Herm(\W)$, one has a cone
program, which is represented by a pair of optimization problems:
\begin{center}
  \begin{minipage}{2in}
    \centerline{\underline{Primal problem}}\vspace{-7mm}
    \begin{align*}
      \text{maximize:}\quad & \ip{A}{X}\\
      \text{subject to:}\quad & \Phi(X) = B,\\
      & X \in \K.
    \end{align*}
  \end{minipage}
  \hspace*{1.5cm}
  \begin{minipage}{2.4in}
    \centerline{\underline{Dual problem}}\vspace{-7mm}
    \begin{align*}
      \text{minimize:}\quad & \ip{B}{Y}\\
      \text{subject to:}\quad & \Phi^{\ast}(Y) - A \in \K^{\ast},\\
      & Y\in\Herm(\W).
    \end{align*}
  \end{minipage}
\end{center}
Here, $\K^{\ast}$ denotes the \emph{dual cone} to $\K$, defined as
\begin{equation}
  \K^{\ast} = \{Y\in\Herm(\Z)\,:\,\ip{X}{Y} \geq 0\;\:\text{for all $X\in\K$}\}
\end{equation}
and $\Phi^{\ast}:\Herm(\W)\rightarrow\Herm(\Z)$ is the adjoint mapping to
$\Phi$, which is uniquely determined by the equation
$\ip{Y}{\Phi(X)} = \ip{\Phi^{\ast}(Y)}{X}$ holding for all $X\in\Herm(\Z)$ and
$Y\in\Herm(\W)$.

With the optimization problems above in mind, one defines the 
\emph{feasible sets} $\A$ and $\B$ of the primal and dual problems as
\begin{equation}
  \A = \bigl\{ X \in \K : \Phi(X) = B\bigr\} 
  \qquad \text{and} \qquad 
  \B = \bigl\{ Y \in \Herm(\W) : \Phi^{\ast}(Y) - A \in \K^{\ast} \bigr\}.
\end{equation}
One says that the associated cone program is \emph{primal feasible} if 
$\A \neq \emptyset$, and is \emph{dual feasible} if $\B \neq \emptyset$. 
The function $X \mapsto \ip{A}{X}$ from $\Herm(\Z)$ to $\real$ is called the 
\emph{primal objective function}, and the function $Y \mapsto \ip{B}{Y}$ from 
$\Herm(\W)$ to $\real$ is called the \emph{dual objective function}.
The \emph{optimal values} associated with the primal and dual problems are
defined as
\begin{equation}
  \label{eq:alpha-and-beta}
  \alpha = \sup \bigl\{ \ip{A}{X} : X \in \A \bigr\} 
  \qquad \text{and} \qquad 
  \beta = \inf \bigl\{ \ip{B}{Y} : Y \in \B \bigr\},
\end{equation}
respectively.
(It is conventional to interpret that $\alpha = -\infty$ when $\A = \emptyset$
and $\beta = \infty$ when $\B = \emptyset$.)
The property of \emph{weak duality}, which holds for all cone programs, is
that the primal optimum can never exceed the dual optimum.

\begin{prop}[Weak duality for cone programs]
  For any choice of complex Euclidean spaces $\Z$ and $\W$, a closed, convex
  cone $\K\subseteq\Herm(\Z)$, Hermitian operators $A\in\Herm(\Z)$ and
  $B\in\Herm(\W)$, and a linear map $\Phi:\Herm(\Z)\rightarrow\Herm(\W)$, it
  holds that $\alpha \leq \beta$, for $\alpha$ and $\beta$ as defined in
  \eqref{eq:alpha-and-beta}.
\end{prop}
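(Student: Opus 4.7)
The plan is to reduce the inequality $\alpha \leq \beta$ to a pointwise statement: for \emph{any} primal feasible $X \in \A$ and any dual feasible $Y \in \B$, the inequality $\ip{A}{X} \leq \ip{B}{Y}$ holds. Once this is established, taking the supremum over $X \in \A$ and the infimum over $Y \in \B$ gives $\alpha \leq \beta$. The infeasible cases are handled by convention: if $\A = \emptyset$ then $\alpha = -\infty \leq \beta$ automatically, and if $\B = \emptyset$ then $\beta = +\infty \geq \alpha$ automatically, so we may assume both feasible sets are nonempty.

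The key computation is to rewrite the difference $\ip{B}{Y} - \ip{A}{X}$ as an inner product between an element of $\K$ and an element of $\K^{\ast}$. Using the primal constraint $\Phi(X) = B$ to substitute for $B$, and then invoking the defining property of the adjoint $\Phi^{\ast}$, one obtains
\begin{equation}
  \ip{B}{Y} - \ip{A}{X} \;=\; \ip{\Phi(X)}{Y} - \ip{A}{X} \;=\; \ip{X}{\Phi^{\ast}(Y)} - \ip{A}{X} \;=\; \ip{X}{\Phi^{\ast}(Y) - A}.
\end{equation}
Now $X \in \K$ by primal feasibility and $\Phi^{\ast}(Y) - A \in \K^{\ast}$ by dual feasibility, so the definition of the dual cone immediately yields $\ip{X}{\Phi^{\ast}(Y) - A} \geq 0$, hence $\ip{A}{X} \leq \ip{B}{Y}$.

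There is really no obstacle here; the proof is a one-line manipulation once one unpacks the definitions of $\Phi^{\ast}$ and $\K^{\ast}$. The only subtlety worth flagging is the handling of the infinities in the conventional values of $\alpha$ and $\beta$ when either feasible set is empty, but this is purely a bookkeeping matter. In particular, no regularity hypotheses (such as Slater's condition) are needed for weak duality, in contrast to strong duality, which would require further assumptions to close the gap $\alpha = \beta$.
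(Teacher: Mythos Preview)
Your proof is correct and follows essentially the same approach as the paper's: both handle the empty feasible sets by convention, then for arbitrary $X\in\A$ and $Y\in\B$ use the adjoint relation $\ip{\Phi(X)}{Y}=\ip{X}{\Phi^{\ast}(Y)}$ together with $X\in\K$ and $\Phi^{\ast}(Y)-A\in\K^{\ast}$ to obtain $\ip{A}{X}\leq\ip{B}{Y}$, and conclude by taking the supremum and infimum. The only cosmetic difference is that you compute $\ip{B}{Y}-\ip{A}{X}$ directly, whereas the paper rewrites $\ip{A}{X}$ and bounds it above; the content is identical.
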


\begin{proof}
The proposition is trivial in case $\A = \emptyset$ (which implies that 
$\alpha = -\infty)$ or $\B = \emptyset$ (which implies that $\beta = \infty$), 
so we will restrict our attention to the case that both $\A$ and $\B$ are 
nonempty. 
For any choice of $X \in \A$ and $Y \in \B$, one must have $X \in \K$ and
$\Phi^{\ast}(Y) - A \in \K^{\ast}$, and therefore
$\ip{\Phi^{\ast}(Y) - A}{X} \geq 0$.
It follows that
\begin{equation}
  \ip{A}{X} = \ip{\Phi^{\ast}(Y)}{X} - \ip{\Phi^{\ast}(Y) - A}{X}
  \leq \ip{Y}{\Phi(X)} = \ip{B}{Y}.
\end{equation}
Taking the supremum over all $X \in \A$ and the infimum over all 
$Y \in \B$ establishes that $\alpha \leq \beta$.
\end{proof}

Weak duality implies that every dual feasible operator $Y \in \B$ provides an
upper bound of $\ip{B}{Y}$ on the value $\ip{A}{X}$ that is achievable over all
choices of a primal feasible $X \in \A$, and likewise every primal feasible 
operator $X \in \A$ provides a lower bound of $\ip{A}{X}$ on the value 
$\ip{B}{Y}$
that is achievable over all choices of a dual feasible solution $Y \in \B$. 
In other words, it holds that
$\ip{A}{X} \leq \alpha \leq \beta \leq \ip{B}{Y}$,
for every $X \in \A$ and $Y \in \B$. 

There are simple conditions under which the primal and dual optimal values will
in fact be equal, which is a situation known as \emph{strong duality}.
Although the cone programs considered in this paper do indeed possess this
stronger notion of duality, it is not needed for any of our results.

\subsubsection*{Optimizing over separable measurements}

Let us now return to the state discrimination problem.
Let $\X = \complex^n$ and $\Y=\complex^m$ be complex Euclidean spaces
corresponding to quantum systems held by Alice and Bob, respectively, and let
$\{\rho_1, \ldots, \rho_N\} \subset \Density(\X \otimes \Y)$ be a set
of density operators, representing quantum states of Alice and Bob's shared
systems.
Alice and Bob are given a state $\rho_k$, for some index $k\in\{1,\ldots,N\}$
drawn according to a probability distribution $(p_1,\ldots,p_N)$, as was 
described above in the introduction.
Their goal is to maximize the probability that they correctly identify the
chosen index $k$, assuming that they have complete knowledge of the set
$\{\rho_1,\ldots,\rho_N\}$ and the probability distribution
$(p_1,\ldots,p_N)$.
Our focus is on the situation in which they do this by means of a separable
measurement $\{P_1,\ldots,P_N\}$.

Hereafter, let us denote the set of all separable operators acting on the space
$\X\otimes\Y$ by $\Sep(\X : \Y)$.
One may observe that $\Sep(\X : \Y)$ is a closed, convex cone, which will allow
an optimization over separable measurements $\{P_1,\ldots,P_N\}$ to be
expressed as a cone program.
The dual cone to $\Sep(\X : \Y)$, which is commonly known as the set of 
\emph{block-positive operators}, is defined as
\begin{equation}
  \BPos(\X:\Y) = \bigl\{
  H\in\Herm(\X\otimes\Y)\,:\,
  \ip{P}{H} \geq 0 \;\text{for every $P \in \Sep(\X:\Y)$}\bigr\}.
\end{equation}
There are several equivalent characterizations of this set.
For instance, one has
\begin{equation}
  \BPos(\X:\Y) = \bigl\{
  H\in\Herm(\X\otimes\Y)\,:\,
  (\I_{\X} \otimes y^{\ast}) H (\I_{\X} \otimes y)\in\Pos(\X)
  \;\text{for every $y\in\Y$}\bigr\}.
\end{equation}
Alternatively, block-positive operators can be characterized as representations 
of \emph{positive} linear maps, via the Choi representation.
That is, for any linear map $\Phi:\Lin(\Y)\rightarrow\Lin(\X)$
mapping arbitrary linear operators on $\Y$ to linear operators on $\X$, the
following two properties are equivalent:
\begin{itemize}
\item[(a)] For every positive semidefinite operator $Y \in \Pos(\Y)$, it 
holds that $\Phi(Y) \in \Pos(\X)$. 
\item[(b)] The Choi operator
\begin{equation}
  J(\Phi) = \sum_{0\leq j,k< m} \Phi\bigl(\ket{j}\bra{k}\bigr) 
  \otimes \ket{j}\bra{k}
\end{equation}
of $\Phi$ satisfies $J(\Phi) \in \BPos(\X:\Y)$.
\end{itemize}

We now observe that the following cone program represents the optimal value
of a correct state discrimination in the setting under consideration.
The primal problem is as follows:
\begin{equation}
  \label{eq:primal-problem}
  \begin{split}
    \text{maximize:} \quad & 
    p_1 \ip{\rho_1}{P_1} + \cdots + p_N \ip{\rho_N}{P_N}\\
    \text{subject to:} \quad & P_1 + \cdots + P_N = \I_{\X\otimes\Y}\\
    & P_k\in \Sep(\X:\Y)\quad(\text{for each}\;k = 1,\ldots,N),
  \end{split}
\end{equation}
and the dual problem is as follows:
\begin{equation}
  \label{eq:dual-problem}
  \begin{split}
    \text{minimize:} \quad & \tr(H)\\
    \text{subject to:} \quad & H-p_k\rho_k\in\BPos(\X:\Y)
    \quad(\text{for each}\;k = 1,\ldots,N)\\
    \quad & H \in \Herm(\X\otimes\Y).
  \end{split}
\end{equation}

If one is to formally specify this problem according to the general form for
cone programs presented above, the operators $P_1,\ldots,P_N$ may be
represented as a block matrix of the form
\begin{equation}
  X = \begin{pmatrix}
    P_1 & \cdots & \cdot \\
    \vdots & \ddots & \vdots\\
    \cdot & \cdots & P_N
  \end{pmatrix} \in \Herm((\X\otimes\Y) \oplus \cdots \oplus (\X\otimes\Y))
\end{equation}
with the off-diagonal blocks being left unspecified.
The cone $\K$ is taken to be the cone of operators of this form for which each
$P_k$ is separable, and the mapping $\Phi$ and operators $A$ and $B$ are
chosen in the natural way:
\begin{equation}
  A = \begin{pmatrix}
    p_1 \rho_1 & \cdots & 0 \\
    \vdots & \ddots & \vdots\\
    0 & \cdots & p_N\rho_N
  \end{pmatrix},
  \quad
  B = \I, \quad\text{and}\quad
  \Phi\begin{pmatrix}
  P_1 & \cdots & \cdot \\
  \vdots & \ddots & \vdots\\
  \cdot & \cdots & P_N
  \end{pmatrix}
  = P_1+\cdots+P_N.
\end{equation}
One may verify that the dual problem is as claimed by a straightforward
computation.

By weak duality for cone programs, an upper bound on the probability with which
a separable measurement can discriminate the states $\rho_1,\ldots,\rho_N$
given with probabilities $p_1,\ldots,p_N$ is given by every dual feasible
solution to this cone program:
for any Hermitian operator $H\in\Herm(\X\otimes\Y)$ for which $H-p_k\rho_k$ is
block positive for every $k = 1,\ldots,N$, the probability of a correct
discrimination is upper-bounded by $\tr(H)$.
\section{The entanglement cost of discriminating Bell states}
\label{sec:bell}

As explained in the introduction, three Bell states given with uniform
probabilities can be discriminated by separable measurements with success
probability at most 2/3, while four can be discriminated with success
probability at most 1/2.
These bounds can be obtained by a fairly trivial selection of LOCC
measurements, and can be shown to hold even for PPT measurements.

In this section, we study state discrimination problems for sets of three or
four Bell states, by LOCC, separable, and PPT measurements, with the assistance
of an entangled pair of qubits.
In particular, we will assume that Alice and Bob aim to discriminate a set of
Bell states given that they share the additional resource state
\begin{equation}
  \label{eq:tau_eps}
  \ket{\tau_{\eps}} = \sqrt{\frac{1 + \eps}{2}}\,\ket{0}\ket{0} + 
  \sqrt{\frac{1 - \eps}{2}}\,\ket{1}\ket{1},
\end{equation}
for some choice of $\eps \in [0,1]$.
The parameter $\eps$ quantifies the amount of entanglement in the state
$\ket{\tau_{\eps}}$.
Up to local unitaries, this family of states represents every pure state of two
qubits.

The entanglement cost of quantum operations and measurements, within the
paradigm of LOCC, has been considered previously.
For instance, \cite{Cohen2008} studied the entanglement cost of perfectly
discriminating elements of unextendable product sets by LOCC measurements,
\cite{BandyopadhyayBKW2009} and \cite{BandyopadhyayRW2010} considered the
entanglement cost of measurements and established lower bounds on the amount of
entanglement necessary for distinguishing complete orthonormal bases of two
qubits, and \cite{YuDY2014} considered the entanglement cost of state
discrimination problems by PPT and separable measurements.

Using the cone programming method discussed in the previous section, we obtain
exact expressions for the optimal probability with which any set of three or
four Bell states can be discriminated with the assistance of the state
\eqref{eq:tau_eps} by separable measurements (which match the probabilities
obtained by LOCC measurements in all cases).
This answers an open question raised in \cite{YuDY2014}.

\subsubsection*{Discriminating three Bell states}

Notice that the state $\ket{\tau_{1}} = \ket{0}\ket{0}$ is a product state and
it does not aid the two parties in discriminating any set of Bell states,
so the probability of success for $\eps = 1$ is still at most $2/3$ for a set
of three Bell states.
If $\varepsilon = 0$, then Alice and Bob can use teleportation to perfectly
discriminate all four Bell states perfectly by LOCC measurements, and therefore
the same is true for any three Bell states.
It was proved in \cite{YuDY2014} that PPT measurements can perfectly
discriminate any set of three Bell states using the resource state
\eqref{eq:tau_eps} if and only if $\eps \leq 1/3$.

Here we show that a maximally entangled state ($\eps = 0$) is required to 
perfectly discriminate any set of three Bell states using separable
measurements, and more generally we obtain an expression for the optimal
probability of a correct discrimination for all values of $\varepsilon$.
Because the permutations of Bell states induced by local unitaries is
transitive, there is no loss of generality in fixing the three Bell states to
be discriminated to be $\ket{\phi_1}$, $\ket{\phi_2}$, and $\ket{\phi_3}$
(as defined in \eqref{eq:Bell-states}).

\begin{theorem}
  \label{thm:three-bell}
  Let $\X_1 = \X_2 = \Y_1 = \Y_2 = \complex^2$, define 
  $\X = \X_1 \otimes \X_2$ and $\Y = \Y_1 \otimes \Y_2$, and let 
  $\eps \in [0,1]$ be chosen arbitrarily. 
  For any separable measurement $\{P_1,P_2,P_3\}\subset\Sep(\X:\Y)$, the
  success probability of correctly discriminating the states corresponding to
  the set
  \begin{equation}
    \label{eq:set-three-bells}
    \bigl\{ \ket{\phi_{1}} \otimes \ket{\tau_{\eps}},\; 
    \ket{\phi_{2}} \otimes \ket{\tau_{\eps}},\;
    \ket{\phi_{3}} \otimes \ket{\tau_{\eps}} \bigr\}
    \subset (\X_1\otimes\Y_1)\otimes(\X_2\otimes\Y_2),
  \end{equation}
  assuming a uniform distribution $p_1 = p_2 = p_3 = 1/3$, is at most
  \begin{equation}
  \label{eq:probability-three-bell}
    \frac{1}{3}\left(2 + \sqrt{1 - \eps^{2}}\right).
  \end{equation}
\end{theorem}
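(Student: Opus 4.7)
The strategy is to apply the cone-programming framework developed above: by weak duality, it suffices to exhibit an operator $H \in \Herm(\X \otimes \Y)$ satisfying $H - \tfrac{1}{3}\,E_k \otimes \sigma \in \BPos(\X:\Y)$ for each $k \in \{1,2,3\}$, where $E_k = \ket{\phi_k}\bra{\phi_k}$ and $\sigma = \ket{\tau_\eps}\bra{\tau_\eps}$, whose trace equals $\tfrac{1}{3}(2 + \sqrt{1 - \eps^2})$. The construction of such an $H$ is the main content of the proof.

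The natural place to start is to work in the Bell basis on $\X_1 \otimes \Y_1$ and to write $\ket{\tau_\eps} = \sqrt{p}\,\ket{00} + \sqrt{q}\,\ket{11}$ with $p = (1+\eps)/2$ and $q = (1-\eps)/2$, so that $\sqrt{1-\eps^2} = 2\sqrt{pq}$. The purely diagonal ansatz $\tfrac{1}{3}\sum_{k=1}^{3} E_k \otimes \sigma$ is already dual-feasible, but its trace is $1$, so the desired $H$ must lower this trace by exactly $\tfrac{1}{3}(1 - \sqrt{1-\eps^2})$. Off-diagonal Bell-basis terms of the form $\ket{\phi_k}\bra{\phi_l} \otimes T_{kl}$ contribute nothing to the trace, yet can be used to compensate for a negative correction on the diagonal (including on the $E_4$ block). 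The ansatz therefore has the schematic form
\begin{equation*}
  H \;=\; \tfrac{1}{3}\sum_{k=1}^{3} E_k \otimes \sigma \;-\; (\text{diagonal correction supported on } E_4) \;+\; (\text{off-diagonal Bell-basis interference terms}),
\end{equation*}
with coefficients involving $\sqrt{pq}$, guided by the boundary cases $\eps = 0$ (perfect discrimination, no correction needed) and $\eps = 1$ (the no-ancilla problem, for which $H = \tfrac{1}{6}\,I_{\X_1 \otimes \Y_1}\otimes \ket{00}\bra{00}$ suffices).

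To verify that $H - \tfrac{1}{3}\, E_k \otimes \sigma \in \BPos(\X:\Y)$, one parametrizes a general product vector $u \otimes v \in \X \otimes \Y$ by writing $u = u^{(0)}\otimes \ket{0}_{\X_2} + u^{(1)}\otimes\ket{1}_{\X_2}$ and $v = v^{(0)}\otimes \ket{0}_{\Y_2} + v^{(1)}\otimes\ket{1}_{\Y_2}$, and expands in the Bell basis of $\X_1 \otimes \Y_1$ using the identity $\ip{\phi_l}{u^{(i)}\otimes v^{(j)}} = \tfrac{1}{\sqrt{2}}\,(u^{(i)})^{\t}\, U_l^\dagger\, v^{(j)}$, with $U_1 = I$, $U_2 = Z$, $U_3 = X$, $U_4 = XZ$. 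Block positivity then reduces to a concrete quadratic inequality in the Pauli-like coefficients $(u^{(i)})^{\t}\, U_l^\dagger\, v^{(j)}$, where the factor $\sqrt{pq}$ enters naturally through the identity $|\sqrt{p}\,x + \sqrt{q}\,y|^2 = p\,|x|^2 + q\,|y|^2 + 2\sqrt{pq}\,\mathrm{Re}(\bar{x}y)$. The trace of $H$ is computed directly, using that only the diagonal Bell-basis blocks contribute. The principal obstacle is pinning down the correct off-diagonal and $E_4$-correction terms: since the block-positivity inequality must hold uniformly over all product vectors, the ``bad'' configurations in which $\ip{\phi_l}{u^{(i)}\otimes v^{(j)}}$ is large for one index $l$ but vanishes for others force a delicate simultaneous balancing of all the corrections at once.
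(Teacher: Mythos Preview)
Your proposal identifies the correct framework (weak duality for the cone program, exhibiting a dual-feasible $H$ with the right trace) but stops short of a proof: you never write down an explicit $H$, and you explicitly flag the construction of the correction terms and the verification of block positivity as ``the principal obstacle'' without resolving it. The entire content of the theorem lies in that step, so what you have is a plan, not a proof.

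Two more specific points. First, your schematic ansatz starts from $\tfrac{1}{3}\sum_{k=1}^{3} E_k\otimes\sigma$ and proposes to add off-diagonal Bell-basis blocks $\ket{\phi_k}\bra{\phi_l}\otimes T_{kl}$; this is more complicated than necessary and may send you in the wrong direction. The paper's dual-feasible operator is
\[
H_{\eps} \;=\; \frac{1}{3}\Bigl[\tfrac{1}{2}\,\I_{\X_1\otimes\Y_1}\otimes\tau_{\eps} \;+\; \sqrt{1-\eps^{2}}\;\phi_{4}\otimes \pt_{\negsmallspace\X_2}(\phi_{4})\Bigr],
\]
which is \emph{diagonal} in the Bell basis on $\X_1\otimes\Y_1$ (no cross terms $\ket{\phi_k}\bra{\phi_l}$ with $k\neq l$), and the trace-lowering is achieved by halving the diagonal weight rather than by subtracting on the $E_4$ block. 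Second, even with this simpler $H_{\eps}$, verifying $H_{\eps}-\tfrac{1}{3}\phi_k\otimes\tau_{\eps}\in\BPos(\X:\Y)$ directly via the product-vector expansion you sketch is not straightforward; the paper handles it by recognizing the associated map on $\Lin(\complex^{2}\oplus\complex^{2})$ as a conjugate of an explicitly constructed positive map $\Xi_t$ (Lemma~\ref{lemma:3Bell}), whose positivity is proved via a $2\times 2$ block identity and the Schur product theorem. That lemma is the technical heart of the argument, and nothing in your outline substitutes for it.
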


To prove this theorem, we require the following lemma.
The lemma introduces a family of positive maps that, to our knowledge, has not
previously appeared in the literature.

\begin{lemma}
  \label{lemma:3Bell}
  Define a linear mapping
  $\Xi_{t}: \Lin(\complex^2 \oplus \complex^2)\rightarrow
  \Lin(\complex^2 \oplus \complex^2)$ as
  \begin{equation}
    \Xi_t\begin{pmatrix}
    A & B\\
    C & D
    \end{pmatrix}
    = \begin{pmatrix}
      \Psi_t(D) + \Phi(D) &
      \Psi_t(B) + \Phi(C)\\[2mm]
      \Psi_t(C) + \Phi(B) &
      \Psi_t(A) + \Phi(A)
    \end{pmatrix}
  \end{equation}
  for every $t\in(0,\infty)$ and $A,B,C,D\in\Lin(\complex^2)$, where
  $\Psi_t:\Lin(\complex^2)\rightarrow\Lin(\complex^2)$
  is defined as
  \begin{equation}
    \Psi_t
    \begin{pmatrix}
      \alpha & \beta \\
      \gamma & \delta
    \end{pmatrix}
    = 
    \begin{pmatrix}
      t \alpha & \beta \\
      \gamma & t^{-1} \delta
    \end{pmatrix}
  \end{equation}
  and $\Phi:\Lin(\complex^2)\rightarrow\Lin(\complex^2)$ is defined as
  \begin{equation}
    \Phi\begin{pmatrix}
    \alpha & \beta \\
    \gamma & \delta
    \end{pmatrix}
    = \begin{pmatrix}
      \delta & -\beta\\
      -\gamma & \alpha
    \end{pmatrix},
  \end{equation}
  for every $\alpha,\beta,\gamma,\delta\in\complex$.
  It holds that $\Xi_t$ is a positive map for all $t\in (0,\infty)$.
\end{lemma}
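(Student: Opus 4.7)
The plan is to verify $\Xi_t(vv^*) \succeq 0$ for every vector $v \in \complex^2 \oplus \complex^2$; since $\Xi_t$ is manifestly linear and Hermiticity-preserving, this suffices because any positive operator can be written as a nonnegative sum of rank-one operators $vv^*$.

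The first step is to record clean closed forms for the building blocks. Setting $S_t = \op{diag}(\sqrt{t},\, 1/\sqrt{t})$, one checks directly that $\Psi_t(X) = S_t X S_t$, so $\Psi_t$ is completely positive and $\Psi_t(uw^*) = (S_t u)(S_t w)^*$ for all $u, w \in \complex^2$. The map $\Phi$ is the qubit reduction map: one has $\Phi(X) = \sigma_y X^{\t} \sigma_y = \tr(X) I - X$, which is positive, and $\Phi(uw^*) = \tilde w \tilde u^*$ where $\tilde u := \sigma_y \bar u$ satisfies $\langle u, \tilde u\rangle = 0$ and $\|\tilde u\| = \|u\|$. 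Writing $v = (x, y)$ with $x, y \in \complex^2$ and substituting $A = xx^*$, $B = xy^*$, $C = yx^*$, $D = yy^*$ into the definition of $\Xi_t$ yields a $4\times 4$ block matrix $\Xi_t(vv^*) = \bigl(\begin{smallmatrix} D_1 & M \\ M^* & D_2 \end{smallmatrix}\bigr)$ whose diagonal blocks simplify by cancellation (the off-diagonal entries of $(S_t y)(S_t y)^*$ and of $\tilde y \tilde y^*$ are negatives of each other) to the diagonal positive operators
\begin{equation*}
D_1 = \op{diag}\bigl(t|y_1|^2 + |y_2|^2,\; |y_1|^2 + |y_2|^2/t\bigr),\quad D_2 = \op{diag}\bigl(t|x_1|^2 + |x_2|^2,\; |x_1|^2 + |x_2|^2/t\bigr).
\end{equation*}

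The second step is to apply the Schur complement criterion. Provided $D_2 \succ 0$, the condition $\Xi_t(vv^*) \succeq 0$ is equivalent to the $2\times 2$ matrix inequality $D_1 \succeq M D_2^{-1} M^*$; the degenerate cases in which $D_2$ is singular follow by a direct kernel argument (or by continuity). The inequality breaks into three scalar conditions: nonnegativity of the two diagonal entries of $D_1 - M D_2^{-1} M^*$ and nonnegativity of its determinant. Expanding $|M_{ij}|^2$ using the explicit entries $M_{11} = t x_1 \bar y_1 + \bar x_2 y_2$, $M_{12} = x_1 \bar y_2 - \bar x_2 y_1$, $M_{21} = x_2 \bar y_1 - \bar x_1 y_2$, and $M_{22} = x_2 \bar y_2/t + \bar x_1 y_1$, the two diagonal conditions follow from the arithmetic--geometric mean inequality together with the Lagrange identity $\|x\|^2 \|y\|^2 - |\langle x, y\rangle|^2 = |x_1 y_2 - x_2 y_1|^2$.

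The main obstacle is the determinant condition. Clearing $\det(D_2)$, it becomes a degree-four polynomial inequality in $x$ and $y$ with coefficients in $t^{\pm 1}$, which we plan to organize as a sum of squares weighted by $t$ and $t^{-1}$. A useful sanity check is the special value $t = 1$: the identity $X + \Phi(X) = \tr(X) I$ forces the diagonal blocks of $\Xi_1(M)$ to equal $\tr(D) I$ and $\tr(A) I$, giving $\Xi_1(M) = (\sigma_z \otimes I)\,\Phi_{\mathrm{BH}}(M)\,(\sigma_z \otimes I)$, where $\Phi_{\mathrm{BH}}(X) = \tr(X) I - X - (I \otimes \sigma_y) X^{\t} (I \otimes \sigma_y)$ is the Breuer--Hall positive map on $\complex^4$; this yields positivity of $\Xi_1$ immediately. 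The content of the lemma is extending this positivity to all $t > 0$.
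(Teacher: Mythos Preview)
Your proposal is incomplete. You correctly identify the determinant condition in the Schur complement as the ``main obstacle,'' announce a plan to organize it as a sum of squares in $t^{\pm 1}$, and then stop. That polynomial inequality is exactly where the work is, and without it you have not proved the lemma for any $t\neq 1$. The two diagonal conditions are likewise only asserted to follow from AM--GM and Lagrange, not actually verified.

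The irony is that you already possess every ingredient of the paper's short proof and have walked past it. You recorded $\Psi_t(X)=S_tXS_t$ with $S_t=\op{diag}(\sqrt t,1/\sqrt t)$, so each $\Psi_s$ is completely positive; and you verified that $\Xi_1$ is positive (via Breuer--Hall, which is equivalent to the paper's direct computation). The one observation you are missing is the identity
\[
\Psi_s\circ\Phi\circ\Psi_s=\Phi\qquad\text{for every }s>0,
\]
which is immediate from $S_s\sigma_y=\sigma_yS_s^{-1}$ (or a one-line check on matrix entries). Combining $\Psi_s\Psi_s=\Psi_{s^2}$ with this identity and applying $\Psi_{\sqrt t}$ blockwise before and after $\Xi_1$ gives
\[
\Xi_t=\bigl(\I_{\Lin(\complex^2)}\otimes\Psi_{\sqrt t}\bigr)\circ\Xi_1\circ\bigl(\I_{\Lin(\complex^2)}\otimes\Psi_{\sqrt t}\bigr),
\]
a composition of a positive map with completely positive maps, hence positive. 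This is precisely the paper's argument, and it makes your entire Schur-complement program unnecessary: the reduction to $t=1$ is the whole point, not merely a sanity check.
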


\begin{proof}
  It will first be proved that $\Xi_1$ is positive.
  For every vector
  \begin{equation}
    u = \begin{pmatrix}
      \alpha\\ \beta
    \end{pmatrix}
  \end{equation}
  in $\complex^2$, define a matrix
  \begin{equation}
    M_u = \begin{pmatrix}
      \overline\alpha & \overline\beta\\[1mm]
      -\beta & \alpha
    \end{pmatrix}.
  \end{equation}
  Straightforward computations reveal that
  \begin{equation}
    M_u^{\ast} M_v = u v^{\ast} + \Phi(v u^{\ast})
    \qquad\text{and}\qquad
    M_u^{\ast} M_u = \norm{u}^2\tinyspace \I
  \end{equation}
  for all $u,v\in\complex^2$.
  It follows that
  \begin{equation}
    \Xi_1 \begin{pmatrix}
      u u^{\ast} & u v^{\ast}\\
      v u^{\ast} & v v^{\ast}
    \end{pmatrix}
    = \begin{pmatrix}
      v v^{\ast} + \Phi(v v^{\ast}) & 
      u v^{\ast} + \Phi(v u^{\ast}) \\
      v u^{\ast} + \Phi(u v^{\ast}) &
      u u^{\ast} + \Phi(u u^{\ast}) 
    \end{pmatrix}
    = \begin{pmatrix}
      \norm{v}^2 \I & M_u^{\ast} M_v\\[1mm]
      M_v^{\ast} M_u & \norm{u}^2 \I
    \end{pmatrix},
  \end{equation}
  which is positive semidefinite by virtue of the fact that
  $\norm{M_u^{\ast} M_v}\leq\norm{M_u}\norm{M_v} = \norm{u} \norm{v}$.
  As every element of $\Pos(\complex^2\oplus\complex^2)$ can be written as a
  positive linear combination of matrices of the form
  \begin{equation}
    \begin{pmatrix}
      u u^{\ast} & u v^{\ast}\\
      v u^{\ast} & v v^{\ast}
    \end{pmatrix},
  \end{equation}
  ranging over all vectors $u, v \in \complex^2$, it follows that $\Xi_1$ is a
  positive map.

  For the general case, observe first that the mapping $\Psi_s$ may be
  expressed using the Hadamard (or entry-wise) product as
  \begin{equation}
    \Psi_s
    \begin{pmatrix}
      \alpha & \beta \\
      \gamma & \delta
    \end{pmatrix}
    = 
    \begin{pmatrix}
      s \alpha & \beta \\
      \gamma & s^{-1} \delta
    \end{pmatrix}
    =
    \begin{pmatrix}
      s & 1\\
      1 & s^{-1}
    \end{pmatrix} \circ
    \begin{pmatrix}
      \alpha & \beta \\
      \gamma & \delta
    \end{pmatrix}
  \end{equation}
  for every positive real number $s\in(0,\infty)$.
  The matrix
  \begin{equation}
    \begin{pmatrix}
      s & 1\\
      1 & s^{-1}
    \end{pmatrix}
  \end{equation}
  is positive semidefinite, from which it follows (by the Schur product
  theorem) that $\Psi_s$ is a completely positive map.
  (See, for instance, Theorem 3.7 of \cite{Paulsen02}.)
  Also note that $\Phi = \Psi_s \Phi \Psi_s$ for every $s\in (0,\infty)$, which
  implies that
  \begin{equation}
    \Xi_t = \bigl(\I_{\Lin(\complex^2)} \otimes \Psi_s\bigr) \Xi_1
    \bigl(\I_{\Lin(\complex^2)} \otimes \Psi_s\bigr)
  \end{equation}
  for $s = \sqrt{t}$.
  This shows that $\Xi_t$ is a composition of positive maps for every positive
  real number~$t$, and is therefore positive.
\end{proof}

\begin{proof}[Proof of Theorem \ref{thm:three-bell}]
  For the cases that $\eps = 0$ and $\varepsilon = 1$, the theorem is known,
  as was discussed previously, so it will be assumed that $\eps \in (0,1)$.
  Define a Hermitian operator
  \begin{equation}
    H_{\eps} = \frac{1}{3}\left[\frac{\I_{\X_1\otimes\Y_1}\otimes
        \tau_{\eps}}{2} + \sqrt{1 - \eps^{2}} \, \phi_{4} \otimes
      \pt_{\negsmallspace\X_2}(\phi_{4}) \right],
  \end{equation}
  where $\tau_{\eps} = \ket{\tau_{\eps}}\bra{\tau_{\eps}}$,
  $\phi_{4} = \ket{\phi_{4}}\bra{\phi_{4}}$,
  and $\pt_{\negsmallspace\X_2}$ denotes partial transposition with respect to
  the standard basis of $\X_2$.
  It holds that
  \begin{equation}
    \tr(H_{\eps}) = \frac{1}{3}\left(2 + \sqrt{1 - \eps^{2}}\right),
  \end{equation}
  so to complete the proof it suffices to prove that $H_{\varepsilon}$ is a
  feasible solution to the dual problem \eqref{eq:dual-problem} for the cone
  program corresponding to the state discrimination problem being considered.

  In order to be more precise about the task at hand, it is helpful to define a
  unitary operator $W$, mapping $\X_1\otimes\X_2\otimes\Y_1\otimes\Y_2$ to
  $\X_1\otimes\Y_1\otimes\X_2\otimes\Y_2$, that corresponds to swapping the
  second and third subsystems:
  \begin{equation}
    \label{eq:swap}
    W(x_{1}\otimes x_{2}\otimes y_{1}\otimes y_{2}) =
    x_{1}\otimes y_{1}\otimes x_{2}\otimes y_{2},
  \end{equation}
  for all vectors 
  $x_{1}\in\X_{1}$, $x_{2}\in\X_{2}$, $y_{1}\in\Y_{1}$, $y_{2}\in\Y_{2}$.
  We are concerned with the separability of measurement operators with respect
  to the bipartition between $\X_1\otimes\X_2$ and $\Y_1\otimes\Y_2$, so the
  dual feasibility of $H_{\varepsilon}$ requires that the operators defined as
  \begin{equation}
    Q_{k, \eps} = W^{\ast} \left( H_{\eps} - \frac{1}{3}\phi_{k} \otimes 
    \tau_{\eps} \right) W \in \Herm(\X\otimes\Y)
  \end{equation}
  be contained in $\BPos(\X:\Y)$ for $k = 1,2,3$.

  Let $\Lambda_{k, \eps}: \Lin(\Y) \rightarrow \Lin(\X)$ be the unique linear
  map whose Choi representation satisfies 
  $J(\Lambda_{k,\varepsilon}) = Q_{k, \eps}$ for each $k = 1,2,3$.
  As discussed in Section~\ref{sec:cone}, the block positivity of $Q_{k, \eps}$
  is equivalent to the positivity of $\Lambda_{k, \eps}$.
  Consider first the case $k = 1$ and let
  \begin{equation}
    t = \sqrt{\frac{1+\eps}{1-\eps}}.
  \end{equation}
  A calculation reveals that
  \begin{equation}
    \Lambda_{1, \eps}(Y) = \frac{\sqrt{1 - \eps^{2}}}{12}
    \left(\sigma_{3} \otimes \I_{\X_2}\right)
    \Xi_{t}(Y)
    \left(\sigma_{3} \otimes \I_{\X_2}\right),
  \end{equation}
  where $\Xi_{t}:\Lin(\Y)\rightarrow\Lin(\X)$ is the map defined in
  Lemma~\ref{lemma:3Bell} and (in general)
  \begin{equation} \label{eq:Pauli-operators}
  \begin{array}{llll}
    \sigma_{0} = \begin{pmatrix} 1 & 0 \\ 0 & 1 \end{pmatrix}, & 
    \sigma_{1} = \begin{pmatrix} 0 & 1 \\ 1 & 0 \end{pmatrix}, & 
    \sigma_{2} = \begin{pmatrix} 0 & -i \\ i & 0 \end{pmatrix}, & 
    \sigma_{3} = \begin{pmatrix} 1 & 0 \\ 0 & -1 \end{pmatrix}
  \end{array}
  \end{equation}
  denote the Pauli operators.
  As $\eps \in (0,1)$, it holds that $t\in (0,\infty)$, and therefore
  Lemma~\ref{lemma:3Bell} implies that $\Xi_{t}(Y) \in \Pos(\X)$ for every
  $Y \in \Pos(Y)$. 
  As we are simply conjugating $\Xi_{t}(Y)$ by a unitary and scaling it 
  by a positive real factor, we also have that 
  $\Lambda_{1, \eps}(Y) \in \Pos(\X)$, for any $Y \in \Pos(Y)$, which in turn
  implies that $Q_{1, \eps} \in \BPos(\X:\Y)$.

  For the case of $k=2$ and $k = 3$, first define 
  $U, V \in \Unitary(\complex^{2})$ as follows:
  \begin{equation}
    U = \begin{pmatrix}
      1 & 0\\
      0 & i
    \end{pmatrix}
    \quad \mbox{and} \quad
    V = \frac{1}{\sqrt{2}}\begin{pmatrix}
      1 & i\\
      i & 1
    \end{pmatrix}.
  \end{equation}
  These operators transform $\phi_{1} = \ket{\phi_1}\bra{\phi_1}$ into 
  $\phi_{2} = \ket{\phi_2}\bra{\phi_2}$ and $\phi_{3} =
  \ket{\phi_3}\bra{\phi_3}$, respectively, and leave $\phi_{4}$ unchanged, in
  the following sense:
  \begin{equation}
    \begin{aligned}
      (U^{\ast}\otimes U^{\ast}) \phi_{1} (U\otimes U) = \phi_{2},\\
      (V^{\ast}\otimes V^{\ast}) \phi_{1} (V\otimes V) = \phi_{3},\\
      (U^{\ast}\otimes U^{\ast}) \phi_{4} (U\otimes U) = \phi_{4},\\
      (V^{\ast}\otimes V^{\ast}) \phi_{4} (V\otimes V) = \phi_{4}.
    \end{aligned}
  \end{equation}
  Therefore the following equations hold:
  \begin{equation}
    \begin{aligned}
      Q_{2,\eps} &= \left(U^{\ast}\otimes\I \otimes U^{\ast}\otimes\I\right) 
      Q_{1,\eps} \left(U\otimes\I \otimes U\otimes\I\right),  \\
      Q_{3,\eps} &= \left(V^{\ast}\otimes\I \otimes V^{\ast}\otimes\I\right) 
      Q_{1,\eps} \left(V\otimes\I \otimes V\otimes\I\right).
    \end{aligned}
  \end{equation}
  It follows that $Q_{2,\eps}\in\BPos(\X:\Y)$ and $Q_{3,\eps}\in\BPos(\X:\Y)$,
  which completes the proof.
\end{proof}

\begin{remark}
  The upper bound obtained in Theorem \ref{thm:three-bell} is achievable by an
  LOCC measurement, as it is the probability obtained by using the resource
  state $\ket{\tau_\varepsilon}$ to teleport the given Bell state from one
  player to the other, followed by an optimal local measurement to discriminate
  the resulting states.
\end{remark}

\subsubsection*{Discriminating four Bell states}

It is known that, for the perfect LOCC discrimination of all four Bell states
using an auxiliary entangled state $\ket{\tau_{\varepsilon}}$ as above,
one requires that $\varepsilon = 0$ (i.e., a maximally entangled pair of qubits
is required).
This fact follows from the method of \cite{HorodeckiSSH03}, for instance.
Here we prove a more precise bound on the optimal probability of a correct
discrimination, for every choice of $\varepsilon\in[0,1]$, along similar lines
to the bound on three Bell states provided by Theorem~\ref{thm:three-bell}.
In the present case, in which all four Bell states are considered, the result
is somewhat easier: one obtains an upper bound for PPT measurements that
matches a bound that can be obtained by an LOCC measurement,
implying that LOCC, separable, and PPT measurements are equivalent for this
discrimination problem.

\begin{theorem}
  \label{thm:four-bell}
  Let $\X_1 = \X_2 = \Y_1 = \Y_2 = \complex^2$, define
  $\X = \X_1 \otimes \X_2$ and $\Y = \Y_1 \otimes \Y_2$, and let
  $\varepsilon\in [0,1]$.
  For any PPT measurement $\{P_1,P_2,P_3,P_4\}\subset\PPT(\X:\Y)$, the success
  probability of discriminating the states corresponding to the set
  \begin{equation}
    \label{eq:set-four-bells}
    \left\{ \ket{\phi_{1}} \otimes \ket{\tau_{\eps}},\; 
    \ket{\phi_{2}} \otimes \ket{\tau_{\eps}},\;
    \ket{\phi_{3}} \otimes \ket{\tau_{\eps}},\; 
    \ket{\phi_{4}} \otimes \ket{\tau_{\eps}}\right\} 
    \subset (\X_1\otimes\Y_1)\otimes(\X_2\otimes\Y_2),
  \end{equation}
  assuming a uniform distribution $p_1 = p_2 = p_3 = p_4 = 1/4$, is at most
  \begin{equation}
    \label{eq:probability-four-bell}
    \frac{1}{2}\left(1 + \sqrt{1 - \eps^2}\right).
  \end{equation}
\end{theorem}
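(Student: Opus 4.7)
The plan is to exhibit a dual feasible solution to the cone (equivalently, semidefinite) program whose primal represents the maximum success probability of a PPT measurement on these states. For PPT measurements, the dual cone of $\PPT(\X:\Y)$ is $\Pos(\X\otimes\Y) + \pt_{\negsmallspace\X}(\Pos(\X\otimes\Y))$, so dual feasibility of a Hermitian operator $H\in\Herm(\X\otimes\Y)$ requires that, for each $k\in\{1,2,3,4\}$, the operator $H - \tfrac{1}{4}\,W(\phi_k \otimes \tau_\eps)W^{\ast}$ (with $W$ the subsystem-swap of \eqref{eq:swap}) can be written as $A_k + \pt_{\negsmallspace\X}(B_k)$ with $A_k,B_k\in\Pos(\X\otimes\Y)$. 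By weak duality, $\tr(H)$ for any such $H$ upper bounds the success probability.

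The guiding observation is that $\tfrac{1}{2}(1+\sqrt{1-\eps^2}) = \bigabs{\ip{\tau_\eps}{\phi_1}}^2$ is exactly the squared-fidelity of $\tau_\eps$ with a maximally entangled two-qubit state, which suggests an ansatz for $H_\eps$ built from $\tau_\eps$ and its partial transpose. Concretely, I would try
\begin{equation}
H_\eps \;=\; \frac{1}{8}\,\I_{\X_1\otimes\Y_1} \otimes \Bigl(\tau_\eps \;+\; \sqrt{1-\eps^{2}}\, \pt_{\negsmallspace\X_2}(\tau_\eps)\Bigr),
\end{equation}
appropriately conjugated by $W$. A direct calculation shows $\tr(H_\eps) = \tfrac{1}{2}(1+\sqrt{1-\eps^2})$. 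One then needs to verify that $H_\eps - \tfrac{1}{4}\,W(\phi_k\otimes\tau_\eps)W^{\ast}$ lies in $\Pos + \pt_{\negsmallspace\X}(\Pos)$ for every $k$. Because the local unitaries $U$ and $V$ from the proof of Theorem~\ref{thm:three-bell}, together with a third unitary mapping $\phi_4$ to $\phi_1$, permute the Bell states transitively while fixing $\phi_4$ or conjugating it by local unitaries, and because $\pt_{\negsmallspace\X}(\Pos)$ is invariant under local unitary conjugation (up to replacing the unitary with its complex conjugate on the relevant factor), it suffices to verify the decomposition for a single representative, say $k=1$. That verification reduces to a block computation on small Hermitian matrices.

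The main obstacle is guessing the correct form of $H_\eps$ and then producing the explicit PSD-plus-partial-transpose-of-PSD decomposition of $H_\eps - \tfrac{1}{4}W(\phi_1\otimes\tau_\eps)W^{\ast}$. This should nonetheless be substantially easier than the three-Bell case: the dual cone of $\PPT$ is far larger than $\BPos$, so no novel positive map analogous to $\Xi_t$ of Lemma~\ref{lemma:3Bell} is required, and the decomposition can be produced by direct linear-algebraic manipulation on a fixed $4\times 4$ block. Once dual feasibility is established, weak duality applied to the PPT analogue of the cone program \eqref{eq:primal-problem}--\eqref{eq:dual-problem} immediately gives the upper bound $\tfrac{1}{2}(1+\sqrt{1-\eps^{2}})$.

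For the matching LOCC lower bound (which shows the bound is tight and therefore characterizes LOCC, separable, and PPT measurements simultaneously), I would describe a simple protocol: Alice and Bob apply an optimal local filter to $\tau_\eps$ that with probability $\tfrac{1}{2}(1+\sqrt{1-\eps^{2}})$ produces a maximally entangled pair of qubits (and otherwise a known product state), use the resulting maximally entangled pair to teleport Alice's qubit of the unknown Bell state to Bob, and then have Bob perform the optimal measurement that perfectly discriminates the four Bell states. On success this yields the correct answer with certainty, so the overall success probability matches the upper bound.
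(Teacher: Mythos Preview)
Your overall strategy---exhibit a dual-feasible $H_\eps$ for the PPT cone program and invoke weak duality---is exactly what the paper does. The paper's choice of certificate is almost the same as yours but with one crucial substitution: it uses
\[
H_\eps \;=\; \frac{1}{8}\Bigl[\I_{\X_1\otimes\Y_1}\otimes\tau_\eps \;+\; \sqrt{1-\eps^2}\,\I_{\X_1\otimes\Y_1}\otimes\pt_{\negsmallspace\X_2}(\phi_4)\Bigr],
\]
i.e.\ $\pt_{\negsmallspace\X_2}(\phi_4)$ rather than your $\pt_{\negsmallspace\X_2}(\tau_\eps)$. This matters for the verification. With the paper's choice, one does not need the full $\Pos+\pt_{\negsmallspace\X}(\Pos)$ decomposition at all: the stronger sufficient condition $(\pt_{\negsmallspace\X_1}\otimes\pt_{\negsmallspace\X_2})\bigl(H_\eps-\tfrac{1}{4}\phi_k\otimes\tau_\eps\bigr)\in\Pos$ already holds, and it reduces to the single $4\times 4$ check that $\pt_{\negsmallspace\X_2}(\tau_\eps)+\tfrac{\sqrt{1-\eps^2}}{2}\,\phi_4\geq 0$. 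With your ansatz that shortcut fails (for instance, evaluate at the vector $\ket{\phi_4}\otimes\ket{\phi_4}$ after partial transposition and you get a strictly negative value for $\eps<1$), so you would genuinely have to produce a nontrivial splitting $A_k+\pt_{\negsmallspace\X}(B_k)$; you have not done this, and it is not obvious it can be done for your $H_\eps$.

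Separately, your LOCC lower-bound paragraph contains an error. The maximum probability of filtering $\ket{\tau_\eps}$ to a maximally entangled pair by LOCC is $1-\eps$ (Vidal's theorem: it is twice the smallest Schmidt weight), not $\tfrac{1}{2}(1+\sqrt{1-\eps^2})$; indeed at $\eps=1$ your formula gives $1/2$, yet a product state cannot be filtered to an entangled one with any positive probability. The matching lower bound is instead achieved by direct teleportation through $\ket{\tau_\eps}$ followed by an optimal local discrimination of the four resulting (non-orthogonal) states, as the paper notes in the remark following the theorem. This does not affect the theorem itself, which only asserts the upper bound, but the protocol you describe does not attain it.
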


\begin{proof}
  One may formulate a cone program corresponding to state discrimination by 
  PPT measurements along similar lines to the cone program for separable
  measurements, simply by replacing the cone $\Sep(\X:\Y)$ by the cone
  $\PPT(\X:\Y)$ of positive semidefinite operators whose partial transpose is
  positive semidefinite.
  This cone program is a semidefinite program, as discussed in
  \cite{Cosentino13}, by virtue of the fact that partial transpose mapping is
  linear.
  
  Consider the following operator:
  \begin{equation}
    H_{\varepsilon} = \frac{1}{8}
    \Bigl[
      \I_{\X_1 \otimes \Y_1} \otimes \tau_{\varepsilon}
      + \sqrt{1 - \varepsilon^2}\,
      \I_{\X_1 \otimes \Y_1} \otimes \pt_{\negsmallspace\X_2}(\phi_4)
      \Bigr] \in \Herm(\X_1\otimes\Y_1\otimes\X_2\otimes\Y_2).
  \end{equation}
  It holds that
  \begin{equation}
    \tr(H_{\eps}) = \frac{1}{2}\left( 1 + \sqrt{1-\eps^2} \right),
  \end{equation}
  so to complete the proof it suffices to prove that $H_{\varepsilon}$ is
  dual feasible for the (semidefinite) cone program representing the 
  PPT state discrimination problem under consideration.
  Dual feasibility will follow from the condition
  \begin{equation}
    (\pt_{\negsmallspace\X_1} \otimes \pt_{\negsmallspace\X_2})
    \Bigl(
    H_{\varepsilon} - \frac{1}{4}\,\phi_k\otimes\tau_{\varepsilon}\Bigr)
    \in\Pos(\X_1\otimes\Y_1\otimes\X_2\otimes\Y_2)
  \end{equation}
  (which is sufficient but not necessary for feasibility) for $k = 1,2,3,4$.
  One may observe that
  \begin{equation}
    \pt_{\negsmallspace\X_2}(\tau_{\varepsilon}) 
    + \frac{\sqrt{1-\varepsilon^2}}{2}\phi_4
    = \frac{1}{2}
    \begin{pmatrix}
      1 + \varepsilon & 0 & 0 & 0\\[1mm]
      0 & \frac{\sqrt{1 - \varepsilon^2}}{2} 
      & \frac{\sqrt{1 - \varepsilon^2}}{2} & 0\\[2mm]
      0 & \frac{\sqrt{1 - \varepsilon^2}}{2} 
      & \frac{\sqrt{1 - \varepsilon^2}}{2} & 0\\[2mm]
      0 & 0 & 0 & 1 - \varepsilon
    \end{pmatrix}
  \end{equation}
  is positive semidefinite, from which it follows that
  \begin{equation}
    (\pt_{\negsmallspace\X_1} \otimes \pt_{\negsmallspace\X_2})
    \Bigl(
    H_{\varepsilon} - \frac{1}{4}\,\phi_1\otimes\tau_{\varepsilon}\Bigr)
    = \frac{1}{4} \phi_4 \otimes \pt_{\negsmallspace\X_2}(\tau_{\varepsilon})
    + \frac{\sqrt{1 - \varepsilon^2}}{8} \I_{\X_1\otimes\Y_1} \otimes \phi_4
  \end{equation}
  is also positive semidefinite.
  A similar calculation holds for $k=2,3,4$, which completes the proof.
\end{proof}

\begin{remark}
  Similar to Theorem \ref{thm:three-bell}, one has that the upper bound
  obtained by Theorem \ref{thm:four-bell} is optimal for LOCC measurements,
  as it is the probability obtained using teleportation.
\end{remark}

\section{State discrimination and unextendable product sets}
\label{sec:upb}

In this section, we study the state discrimination problem for collections of
states formed by unextendable product sets.
An orthonormal collection of product vectors
\begin{equation}
  \A = \{ u_{k}\otimes v_{k} : k = 1, \ldots, N \} \subset \X \otimes \Y,
\end{equation}
for complex Euclidean spaces $\X=\complex^n$ and $\Y=\complex^m$, is said to be
an \emph{unextendable product set} if it is impossible to find a nonzero
product vector $u \otimes v \in \X \otimes \Y$ that is orthogonal to every
element of $\A$ \cite{BennettDMSST99a}.
That is, $\A$ is an unextendable product set if, for every choice of vectors
$u\in\X$ and $v\in\Y$ satisfying either $\ip{u}{u_{k}} = 0$ or 
$\ip{v}{v_{k}} = 0$ for each $k\in\{1,\ldots,N\}$, one has that either $u = 0$
or $v = 0$ (or both).

Two subsections follow.
The first subsection establishes a simple criterion for the states formed by
any unextendable product set to be perfectly discriminated by separable
measurements, and the second subsection proves that any set of states formed
by taking the union of an unextendable product set $\A \subset \X \otimes \Y$ 
together with any pure state $z \in \X\otimes\Y$ orthogonal to every element of
$\A$ cannot be perfectly discriminated by a separable measurement.
(It is evident that PPT measurements allow a perfect discrimination in both
cases.)

\subsubsection*{A criterion for perfect separable discrimination of an
unextendable product set}

Here we provide a simple criterion for when an unextendable product set can be
perfectly discriminated by separable measurements, and we use this criterion to
show that there is an unextendable product set $\A \subset \X\otimes\Y$ that is
not perfectly discriminated by any separable measurement when
$\X = \Y = \complex^4$.
It is known that no unextendable product set $\A \subset \X\otimes\Y$
spanning a proper subspace of $\X\otimes\Y$ can be perfectly discriminated by
an LOCC measurement \cite{BennettDMSST99a}, while every unextendable product
set can be discriminated perfectly by a PPT measurement.
It is also known that every unextendable product set $\A\subset\X\otimes\Y$ can
be perfectly discriminated by separable measurements in the case 
$\X = \Y = \complex^3$ \cite{DMSST03}.

The following notation will be used throughout this subsection.
For $\X=\complex^n$, $\Y=\complex^m$, and
$\A = \{ u_{k}\otimes v_{k} : k = 1, \ldots, N \} \subset \X\otimes\Y$
being an unextendable product set, we will write
$\A_k = \A \backslash \{u_k \otimes v_k\}$, and define a set of rank-one
product projections
\begin{equation}
\label{eq:P_k-sets}
\P_k = \bigl\{ x x^{\ast} \otimes y y^{\ast}\,:\,
x\in\X,\,y\in\Y,\,\norm{x} = \norm{y} =1,\;\text{and}\;
x\otimes y\perp \A_k\bigr\}
\end{equation}
for each $k = 1,\ldots,N$.
One may interpret each element $x x^{\ast} \otimes y y^{\ast}$ of 
$\P_k$ as corresponding to a product vector $x \otimes y$ that could replace
$u_k \otimes v_k$ in $\A$, yielding a (not necessarily unextendable)
orthonormal product set.

The following theorem states that the sets $\P_1,\ldots,\P_N$ defined above
determine whether or not an unextendable product set can be perfectly
discriminated by separable measurements.

\begin{theorem}\label{thm:upb_sep_characterize}
  Let $\X=\complex^n$ and $\Y=\complex^m$ be complex Euclidean spaces and let
  \begin{equation}
    \A = \{ u_{k}\otimes v_{k} : k = 1, \ldots, N \} \subset \X\otimes\Y
  \end{equation}
  be an unextendable product set.
  The following two statements are equivalent:
  \begin{mylist}{6mm}
  \item[1.]
    There exists a separable measurement $\{P_1,\ldots,P_N\}\subset\Sep(\X:\Y)$
    that perfectly discriminates the
    states represented by $\A$ (for any choice of nonzero probabilities
    $p_1,\ldots,p_N$).
  \item[2.]
    For $\P_1,\ldots,\P_N$ as defined in \eqref{eq:P_k-sets}, one has that the
    identity operator $\I_{\X}\otimes\I_{\Y}$ can be written as a nonnegative
    linear combination of projections in the set $\P_1\cup\cdots\cup \P_N$.
  \end{mylist}
\end{theorem}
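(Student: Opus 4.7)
The plan is to show both directions by translating perfect discrimination into orthogonality of the measurement operators against all ``wrong'' states, and then using the cone-decomposition of separable operators into rank-one product projections.

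First I would unpack what perfect discrimination means. If $\{P_1,\ldots,P_N\}$ is a separable measurement achieving success probability $1$ on the pure states $\rho_k = u_k u_k^{\ast} \otimes v_k v_k^{\ast}$ (with any positive prior), then $\sum_j p_j \ip{\rho_j}{P_j} = 1$, and together with $\sum_j P_j = \I$ and $P_j \in \Pos(\X\otimes\Y)$ this forces $\ip{\rho_k}{P_k} = 1$ for every $k$, and hence $\ip{\rho_k}{P_j} = 0$ whenever $j\neq k$. In other words, perfect discrimination is equivalent to the orthogonality conditions $(u_k \otimes v_k)^{\ast} P_j (u_k \otimes v_k) = 0$ for all $j\neq k$, combined with completeness $\sum_j P_j = \I$.

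Next I would use the fact that any $P_j \in \Sep(\X:\Y)$ admits a decomposition $P_j = \sum_i \lambda_{j,i}\, x_{j,i} x_{j,i}^{\ast} \otimes y_{j,i} y_{j,i}^{\ast}$ with $\lambda_{j,i}\geq 0$ and unit vectors $x_{j,i}\in\X$, $y_{j,i}\in\Y$ (by writing each Hermitian $Q_i$ and $R_i$ in a spectral decomposition and absorbing scalars). Because each term in this sum is a positive rank-one product operator, the orthogonality condition $\ip{\rho_k}{P_j}=0$ for $k\neq j$ forces each product vector $x_{j,i}\otimes y_{j,i}$ appearing with $\lambda_{j,i}>0$ to be orthogonal to $u_k\otimes v_k$ for every $k\neq j$. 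This is exactly the defining property of $\P_j$, so every rank-one term of $P_j$ is a projection in $\P_j$. Substituting into $\sum_j P_j = \I$ then expresses $\I_{\X}\otimes\I_{\Y}$ as a nonnegative combination of projections drawn from $\P_1\cup\cdots\cup\P_N$, which is statement~2.

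For the converse, given a decomposition $\I = \sum_{j,i} \lambda_{j,i}\, x_{j,i} x_{j,i}^{\ast} \otimes y_{j,i} y_{j,i}^{\ast}$ with $\lambda_{j,i}\geq 0$ and $x_{j,i} x_{j,i}^{\ast} \otimes y_{j,i} y_{j,i}^{\ast} \in \P_j$, I would simply define $P_j = \sum_i \lambda_{j,i}\, x_{j,i} x_{j,i}^{\ast} \otimes y_{j,i} y_{j,i}^{\ast}$. Then $P_j \in \Sep(\X:\Y)$, $\sum_j P_j = \I$, and the membership in $\P_j$ guarantees that every term of $P_j$ annihilates $\rho_k$ for $k\neq j$, so $\ip{\rho_k}{P_j}=0$ for $k\neq j$. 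Completeness then yields $\ip{\rho_k}{P_k} = 1$, giving perfect discrimination and establishing statement~1.

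There is no real obstacle here; the only point requiring a little care is the argument that a positive combination of rank-one products can only vanish against a fixed state $\rho_k$ if each individual rank-one term does, which is immediate from the nonnegativity of $\lambda_{j,i}$ and $|\ip{x_{j,i}}{u_k}|^2|\ip{y_{j,i}}{v_k}|^2$. The unextendability of $\A$ itself is not strictly needed to prove the equivalence, but it is the hypothesis that makes condition~2 a nontrivial constraint (and the rest of the section then shows how to exploit it to exhibit an obstruction for $\X=\Y=\complex^4$).
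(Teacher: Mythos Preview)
Your proposal is correct and follows essentially the same argument as the paper's proof: both directions hinge on decomposing each separable $P_k$ into rank-one product projections, using the orthogonality conditions from perfect discrimination to place each term in the appropriate $\P_k$, and conversely grouping a given decomposition of the identity to define the measurement. Your added remark that unextendability is not strictly needed for the equivalence itself is accurate and worth noting.
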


\begin{proof}
  Assume first that statement 2 holds, so that one may write
  \begin{equation}
    \I_{\X}\otimes\I_{\Y} =
    \sum_{k = 1}^N \sum_{j = 1}^{M_k} \lambda_{k,j}\,
    x_{k,j} x_{k,j}^{\ast}\otimes y_{k,j} y_{k,j}^{\ast}
  \end{equation}
  for some choice of positive integers $M_1,\ldots,M_N$, nonnegative real
  numbers $\{\lambda_{k,j}\}$, and product vectors
  $\{x_{k,j} \otimes y_{k,j}\}$ satisfying
  \begin{equation}
    x_{k,j} x_{k,j}^{\ast}\otimes y_{k,j} y_{k,j}^{\ast} \in \P_k
  \end{equation}
  for each $k\in\{1,\ldots,N\}$ and $j \in \{1,\ldots,M_k\}$.
  Define
  \begin{equation} \label{eq:P_k-enumeration}
    P_k = \sum_{j = 1}^{M_k} \lambda_{k,j}\,
    x_{k,j} x_{k,j}^{\ast}\otimes y_{k,j} y_{k,j}^{\ast}
  \end{equation}
  for each $k\in\{1,\ldots,N\}$.
  It is clear that $\{P_1,\ldots,P_N\}$ is a separable measurement, and by the
  definition of the sets $\P_1,\ldots,\P_N$ it necessarily holds that
  $\ip{P_k}{u_\ell u_\ell^* \otimes v_\ell v_\ell^*} = 0$ when $k\not=\ell$.
  This implies that $\{P_1,\ldots,P_N\}$ perfectly discriminates the states
  represented by $\A$, and therefore implies that statement 1 holds.
  
  Now assume that statement 1 holds: there exists a separable measurement
  $\{P_1,\ldots,P_N\}$ that perfectly discriminates the states represented by
  $\A$.
  As each measurement operator $P_k$ is separable, it is possible to write
  \begin{equation}
    P_k = \sum_{j = 1}^{M_k} \lambda_{k,j}\,
    x_{k,j} x_{k,j}^{\ast}\otimes y_{k,j} y_{k,j}^{\ast}
  \end{equation}
  for some choice of nonnegative integers $\{M_k\}$, positive real numbers
  $\{\lambda_{k,j}\}$, and unit vectors
  $\{x_{k,j}\,:\,j=1,\ldots,M_k\}\subset\X$ and 
  $\{y_{k,j}\,:\,j=1,\ldots,M_k\}\subset\Y$.
  The assumption that this measurement perfectly discriminates $\A$ implies that
  $x_{k,j}\otimes y_{k,j} \perp \A_k$, and therefore
  $x_{k,j} x_{k,j}^{\ast} \otimes y_{k,j} y_{k,j}^{\ast} \in \P_k$, for each
  $k = 1,\ldots,N$ and $j = 1,\ldots,M_k$.
  As $P_1+\cdots+P_N = \I_{\X} \otimes \I_{\Y}$, it follows that statement 2
  holds.
\end{proof}

It is not immediately clear that Theorem~\ref{thm:upb_sep_characterize} is
useful for determining whether or not any particular unextendable product set
can be discriminated by separable measurements, but indeed it is.
What makes this so is the fact that each set $\P_k$ is necessarily finite, as
the following proposition establishes.

\begin{prop}\label{lem:upb_finite}
  Let $\X$ and $\Y$ be complex Euclidean spaces, let
  $\A = \{ u_{k}\otimes v_{k} : k = 1, \ldots, N \} \subset \X\otimes\Y$
  be an unextendable product set, and let $\P_1,\ldots,\P_N$ be as defined in
  \eqref{eq:P_k-sets}.
  The sets $\P_1,\ldots,\P_N$ are finite.
\end{prop}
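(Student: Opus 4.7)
The plan is to prove the explicit bound $|\P_k| \leq 2^{N-1}$ by associating each projection in $\P_k$ to a subset of $\{1,\ldots,N\}\setminus\{k\}$ and showing the assignment is injective. Specifically, to the projection $xx^{\ast} \otimes yy^{\ast} \in \P_k$ I would attach the set $T = \{\ell \neq k : \ip{y}{v_\ell} = 0\}$, which depends only on $yy^{\ast}$; finiteness will follow once I show that $T$ in turn determines both $xx^{\ast}$ and $yy^{\ast}$.

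The heart of the argument is a dimension bound coming directly from the unextendability of $\A$. Fixing such a pair $(x,y)$ and the corresponding $T$, the orthogonality condition $\ip{x}{u_\ell}\ip{y}{v_\ell} = 0$ for every $\ell \neq k$ forces $\ip{x}{u_\ell} = 0$ whenever $\ell \notin T \cup \{k\}$, so $x$ must lie in the subspace $W_T = \spn\{u_\ell : \ell \neq k,\, \ell \notin T\}^{\perp} \subseteq \X$. I would argue that $\dim W_T \leq 1$: otherwise the intersection of $W_T$ with the hyperplane $\{x' \in \X : \ip{x'}{u_k} = 0\}$ still has positive dimension, and any nonzero $x'$ in this intersection would satisfy $x' \otimes y \perp \A$ (orthogonal to $\A_k$ by the defining properties of $W_T$ and $T$, and orthogonal to $u_k \otimes v_k$ because $\ip{x'}{u_k} = 0$), contradicting the unextendability of $\A$. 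Since $x \in W_T$ is nonzero, $W_T$ is one-dimensional and $xx^{\ast}$ is uniquely determined by $T$.

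The symmetric reasoning, applied with the roles of the two factors exchanged, pins down $yy^{\ast}$ as well. Set $S = \{\ell \neq k : \ip{x}{u_\ell} = 0\}$, which depends only on $x$ and therefore only on $T$. The vector $y$ must lie in $V_S = \spn\{v_\ell : \ell \neq k,\, \ell \notin S\}^{\perp}$, and by an argument identical in form---producing a nonzero $y' \in V_S$ with $\ip{y'}{v_k} = 0$ whenever $\dim V_S \geq 2$ and using it to violate unextendability---one concludes $\dim V_S \leq 1$. Hence $yy^{\ast}$ is determined by $T$, the map $xx^{\ast} \otimes yy^{\ast} \mapsto T$ is injective, and $|\P_k| \leq 2^{N-1}$. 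I do not anticipate significant obstacles beyond careful bookkeeping; the one subtle point is confirming that intersecting $W_T$ (respectively $V_S$) with a single linear hyperplane still leaves a nonzero vector whenever the original subspace has dimension at least two, which is just the rank-nullity observation that a codimension-one condition cuts dimension by at most one.
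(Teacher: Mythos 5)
Your proof is correct and rests on the same mechanism as the paper's: associating each element of $\P_k$ with a subset of $\{1,\ldots,N\}\setminus\{k\}$ and using unextendability to show that the relevant orthogonal-complement subspaces cannot have dimension two or more (else one could impose the extra condition of orthogonality to $u_k$ or $v_k$ and extend $\A$). The paper runs this as a contradiction via pigeonhole on an assumed infinite $\P_k$, whereas your direct injectivity argument additionally yields the explicit bound $\abs{\P_k}\leq 2^{N-1}$, which the paper only uses implicitly in its later computational remark.
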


\begin{proof}
  Assume toward contradiction that $\P_k$ is infinite for some choice of
  $k\in\{1,\ldots,N\}$.
  There are finitely many subsets $S\subseteq \{1,\ldots,k-1,k+1,\ldots,N\}$,
  so  there must exist at least one such subset $S$ with the property that
  there are infinitely many pairwise nonparallel product vectors of the form
  $x\otimes y$ such that $x \perp u_j$ for every $j\in S$ and $y\perp v_j$ for
  every $j\not\in\S$.
  This implies that both the subspace of $\X$ orthogonal to
  $\{u_j\,:\,j\in S\}$ and the subspace of $\Y$ orthogonal to
  $\{v_j\,:\,j\not\in S\}$ have dimension at least 1, and
  at least one of them has dimension at least~2.
  It follows that there must exist a unit product vector $x \otimes y$ with
  three properties:
  (i) $x \perp u_j$ for every $j\in S$,
  (ii) $y \perp v_j$ for every $j\not\in\S$, and
  (iii) $x\otimes y\perp u_k \otimes v_k$.
  This contradicts the fact that $\A$ is unextendable, and therefore completes
  the proof.
\end{proof}

Given Proposition~\ref{lem:upb_finite}, it becomes straightforward to make use
of Theorem~\ref{thm:upb_sep_characterize} computationally.
The sets $\P_1,\ldots,\P_N$ can be computed by iterating over all 
$S \subseteq \{1,\ldots,k-1,k+1,\ldots,N\}$ and
finding the (at most one) product state orthogonal to $\{ u_j : j \in S \}$ on
$\X$ and $\{ v_j : j \notin S \}$ on $\Y$. 
Then, the second statement in Theorem~\ref{thm:upb_sep_characterize} can be
checked through the use of linear programming (and even by hand in some cases).

\subsubsection*{Example}
We now present an example of an unextendable product set in $\X\otimes\Y$,
for $\X = \Y = \complex^4$, that cannot be perfectly discriminated by separable
measurements. 
In particular, let $\A$ be the unextendable product set consisting of $8$
states that were found in \cite{Fen06}:
\begin{equation}
  \begin{array}{ll}
    \ket{\phi_1} = \ket{0}\ket{0}, &
    \ket{\phi_5} = \left(\ket{1} + \ket{2} + \ket{3}\right)\left(\ket{0}
    - \ket{1} + \ket{2}\right)/3, \\
    \ket{\phi_2} = \ket{1}\left(\ket{0} - \ket{2} + \ket{3}\right)/\sqrt{3}, 
    \quad &
    \ket{\phi_6} = \left(\ket{0} - \ket{2} + \ket{3}\right)\ket{2}/\sqrt{3}, \\
    \ket{\phi_3} = \ket{2}\left(\ket{0} + \ket{1} - \ket{3}\right)/\sqrt{3}, &
    \ket{\phi_7} = \left(\ket{0} + \ket{1} - \ket{3}\right)\ket{1}/\sqrt{3}, \\
    \ket{\phi_4} = \ket{3}\ket{3}, &
    \ket{\phi_8} = \left(\ket{0} - \ket{1} + \ket{2}\right)\left(\ket{1}
    + \ket{2} + \ket{3}\right)/3.
  \end{array}
\end{equation}

For each $k = 1, \ldots, 8$, there are exactly $6$ product states contained in
$\A_k$ for each choice of $k$, which we represent by product vectors
$\ket{\phi_{k,j}}$ for $j = 1, \ldots, 6$.
To be explicit, these states are as follows (where we have omitted
normalization factors for brevity):\vspace{3mm}

\noindent\hspace{8pt}
\begin{minipage}{0.46\textwidth}
  $\ket{\phi_{1,1}} = \ket{0}\ket{0}$,\\
  $\ket{\phi_{1,2}} = \left(\ket{0} + \ket{1} - \ket{3}\right)
  \left(\ket{0} + \ket{2}\right)$,\\
  $\ket{\phi_{1,3}} = \left(\ket{0} - \ket{1}\right)
  \left(\ket{0} + \ket{1} - \ket{3}\right)$,
\end{minipage}\hfill
\begin{minipage}{0.46\textwidth}
  $\ket{\phi_{1,4}} = \left(\ket{0} - \ket{1} + 
  \ket{2}\right)\left(\ket{0} - \ket{1} + \ket{2}\right)$, \\
  $\ket{\phi_{1,5}} = \left(\ket{0} + \ket{2}\right)
  \left(\ket{0} - \ket{2} + \ket{3}\right)$,\\
  $\ket{\phi_{1,6}} = \left(\ket{0} - \ket{2} + \ket{3}\right)
  \left(\ket{0} - \ket{1}\right)$,
\end{minipage}

\vspace{2mm}

\noindent\hspace{8pt}
\begin{minipage}{0.46\textwidth}
  $\ket{\phi_{2,1}} = \ket{1}\left(\ket{0} - \ket{2} + \ket{3}\right)$,\\
  $\ket{\phi_{2,2}} = \left(\ket{0} + \ket{1} - \ket{3}\right)\ket{2}$,\\
  $\ket{\phi_{2,3}} =  \left(\ket{0} + \ket{1}\right)\ket{3}$,
\end{minipage}\hfill
\begin{minipage}{0.46\textwidth}
  $\ket{\phi_{2,4}} = \left(\ket{0} - \ket{1} + \ket{2}\right)
  \left(\ket{1} - 2\ket{2} + \ket{3}\right)$, \\
  $\ket{\phi_{2,5}} = \left(\ket{1} + \ket{2} + \ket{3}\right)
  \left(\ket{0} - \ket{1} - 2\ket{2}\right)$, \\
  $\ket{\phi_{2,6}} = \left(\ket{1} - \ket{3}\right)\ket{0}$,
\end{minipage}

\vspace{2mm}

\noindent\hspace{8pt}
\begin{minipage}{0.46\textwidth}   
  $\ket{\phi_{3,1}} = \ket{2}\left(\ket{0} + \ket{1} - \ket{3}\right)$,\\
  $\ket{\phi_{3,2}} = \left(\ket{0} - \ket{2} + \ket{3}\right)\ket{1}$,\\
  $\ket{\phi_{3,3}} =  \left(\ket{2} - \ket{3}\right)\ket{0}$,
\end{minipage}\hfill
\begin{minipage}{0.46\textwidth}
  $\ket{\phi_{3,4}} = \left(\ket{1} + \ket{2} + \ket{3}\right)
  \left(\ket{0} + 2\ket{1} + \ket{2}\right)$, \\
  $\ket{\phi_{3,5}} = \left(\ket{0} - \ket{1} + \ket{2}\right)
  \left(2\ket{1} - \ket{2} - \ket{3}\right)$, \\
  $\ket{\phi_{3,6}} = \left(\ket{0} - \ket{2}\right)\ket{3}$,
\end{minipage}

\vspace{2mm}

\noindent\hspace{8pt}
\begin{minipage}{0.46\textwidth}
  $\ket{\phi_{4,1}} = \ket{3}\ket{3}$,\\
  $\ket{\phi_{4,2}} = \left(\ket{0} + \ket{1} - \ket{2}\right)
  \left(\ket{2} + \ket{3}\right)$,\\
  $\ket{\phi_{4,3}} =  \left(\ket{1} + \ket{3}\right)
  \left(\ket{0} + \ket{1} - \ket{3}\right)$,
\end{minipage}\hfill
\begin{minipage}{0.46\textwidth}
  $\ket{\phi_{4,4}} = \left(\ket{2} + \ket{3}\right)
  \left(\ket{0} - \ket{2} + \ket{3}\right)$, \\
  $\ket{\phi_{4,5}} = \left(\ket{1} + \ket{2} + \ket{3}\right)
  \left(\ket{1} + \ket{2} + \ket{3}\right)$, \\
  $\ket{\phi_{4,6}} = \left(\ket{0} - \ket{2} + \ket{3}\right)
  \left(\ket{1} + \ket{3}\right)$,
\end{minipage}

\vspace{2mm}

\noindent\hspace{8pt}
\begin{minipage}{0.46\textwidth}
  $\ket{\phi_{5,1}} = \left(\ket{1} + \ket{2} + \ket{3}\right)
  \left(\ket{0} - \ket{1} + \ket{2}\right)$,\\
  $\ket{\phi_{5,2}} = \ket{1}\left(2\ket{0} + \ket{2} - \ket{3}\right)$,\\
  $\ket{\phi_{5,3}} = \ket{3}\ket{0}$,
\end{minipage}\hfill
\begin{minipage}{0.46\textwidth}
  $\ket{\phi_{5,4}} = \left(\ket{0} - \ket{2} - 2\ket{3}\right)\ket{2}$, \\
  $\ket{\phi_{5,5}} = \ket{2}\left(2\ket{0} - \ket{1} + \ket{3}\right)$, \\
  $\ket{\phi_{5,6}} = \left(\ket{0} + \ket{1} + 2\ket{3}\right)\ket{1}$,
\end{minipage}

\vspace{2mm}

\noindent\hspace{8pt}
\begin{minipage}{0.46\textwidth}
  $\ket{\phi_{6,1}} = \left(\ket{0} - \ket{2} + \ket{3}\right)\ket{2}$,\\
  $\ket{\phi_{6,2}} = \ket{3}\left(\ket{0} - \ket{2}\right)$,\\
  $\ket{\phi_{6,3}} =  \ket{0}\left(\ket{2} - \ket{3}\right)$,
\end{minipage}\hfill
\begin{minipage}{0.46\textwidth}
  $\ket{\phi_{6,4}} = \left(\ket{0} - \ket{1} - 2\ket{2}\right)
  \left(\ket{1} + \ket{2} + \ket{3}\right)$, \\
  $\ket{\phi_{6,5}} = \ket{2}\left(\ket{0} - \ket{2} + \ket{3}\right)$, \\
  $\ket{\phi_{6,6}} = \left(\ket{1} - 2\ket{2} + \ket{3}\right)
  \left(\ket{0} - \ket{1} + \ket{2}\right)$,
\end{minipage}

\vspace{2mm}

\noindent\hspace{8pt}
\begin{minipage}{0.46\textwidth}
  $\ket{\phi_{7,1}} = \left(\ket{0} + \ket{1} - \ket{3}\right)\ket{1}$,\\
  $\ket{\phi_{7,2}} = \ket{0}\left(\ket{1} - \ket{3}\right)$,\\
  $\ket{\phi_{7,3}} =  \ket{1}\left(\ket{0} + \ket{1} - \ket{3}\right)$,
\end{minipage}\hfill
\begin{minipage}{0.46\textwidth}
  $\ket{\phi_{7,4}} = \left(\ket{0} + 2\ket{1} + \ket{2}\right)
  \left(\ket{1} + \ket{2} + \ket{3}\right)$, \\
  $\ket{\phi_{7,5}} = \left(2\ket{1} - \ket{2} - \ket{3}\right)
  \left(\ket{0} - \ket{1} + \ket{2}\right)$, \\
  $\ket{\phi_{7,6}} = \ket{3}\left(\ket{0} + \ket{1}\right)$,
\end{minipage}

\vspace{2mm}

\noindent\hspace{8pt}
\begin{minipage}{0.46\textwidth}
  $\ket{\phi_{8,1}} = \left(\ket{0} - \ket{1} + \ket{2}\right)
  \left(\ket{1} + \ket{2} + \ket{3}\right)$,\\
  $\ket{\phi_{8,2}} = \ket{1}\left(\ket{0} - \ket{2} - 2\ket{3}\right)$,\\
  $\ket{\phi_{8,3}} =  \left(2\ket{0} - \ket{1} + \ket{3}\right)\ket{1}$,
\end{minipage}\hfill
\begin{minipage}{0.46\textwidth}
  $\ket{\phi_{8,4}} = \ket{0}\ket{3}$, \\
  $\ket{\phi_{8,5}} = \left(2\ket{0} + \ket{2} - \ket{3}\right)\ket{2}$, \\
  $\ket{\phi_{8,6}} = \ket{2}\left(\ket{0} + \ket{1} + 2\ket{3}\right)$.
\end{minipage}

\vspace{3mm}

\noindent 
One may verify by a computer that $\I\otimes\I$ is not contained in the convex
cone generated by 
\begin{equation}
  \label{eq:Feng-replacement-set}
  \bigl\{ \ket{\phi_{k,j}}\bra{\phi_{k,j}}\,:\,k = 1,\ldots,8,\;j=1,\ldots,6
  \bigr\}.
\end{equation}
(In fact, $\I\otimes\I$ is not in the linear span of the set
\eqref{eq:Feng-replacement-set}.)
Theorem~\ref{thm:upb_sep_characterize} therefore implies that this unextendable
product set is not perfectly discriminated by separable measurements.

\subsubsection*{Impossibility to discriminate an unextendable product set
plus one more pure state}

Next, we prove an upper bound on the probability to correctly discriminate
any unextendable product set, together with one extra pure state orthogonal
to the members of the unextendable product set, by a separable measurement.
Central to the proof of this statement is a family of positive linear maps
previously studied in the literature \cite{Terhal2001, Bandyopadhyay2005}.

Before proving this fact, we note that it is fairly straightforward to obtain
a qualitative result along similar lines:
if a separable measurement were able to perfectly discriminate a particular
product set from any state orthogonal to this product set, there would
necessarily be a separable measurement operator orthogonal to the space spanned
by the product set, implying that some nonzero product state must be orthogonal
to the product set (and therefore the product set must be extendable).
Related results based on this sort of argument may be found in
\cite{Bandyopadhyay2011}.
An advantage of the method described in the present paper is that one obtains
precise bounds on the optimal discrimination probability, as opposed to a
statement that a perfect discrimination is not possible.

The following lemma is required for the proof of the theorem below.

\begin{lemma}[Terhal]
  \label{lemma:lambda}
  For given complex Euclidean spaces $\X=\complex^n$ and $\Y=\complex^m$, and
  any unextendable product set
  \begin{equation}
    \A = \{ u_{k}\otimes v_{k} : k = 1, \ldots, N \} \subset \X\otimes\Y,
  \end{equation}
  there exists a positive real number $\lambda_{\A} > 0$ such that
  \begin{equation}
    \left(\I_{\X} \otimes y^{\ast}\right)
    \left( \sum_{k = 1}^{N}u_{k}u_{k}^{\ast}\otimes v_{k}v_{k}^{\ast} \right)
    \left(\I_{\X} \otimes y\right)
    - \lambda_{\A}\norm{y}^{2}\I_{\X} \in \Pos(\X),
  \end{equation}
  for every $y \in \Y$.
\end{lemma}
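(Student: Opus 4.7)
The plan is to reformulate the claim as a compactness argument on the unit sphere of $\Y$. First I would rewrite the expression of interest in a more transparent form: since
\begin{equation}
\left(\I_{\X} \otimes y^{\ast}\right)
\left( u_{k}u_{k}^{\ast}\otimes v_{k}v_{k}^{\ast} \right)
\left(\I_{\X} \otimes y\right)
= \abs{\ip{v_k}{y}}^{2}\, u_k u_k^{\ast},
\end{equation}
the operator on the left-hand side of the lemma equals
\begin{equation}
M(y) \defeq \sum_{k=1}^{N} \abs{\ip{v_k}{y}}^{2}\, u_k u_k^{\ast} \in \Pos(\X).
\end{equation}
The conclusion to be proved is therefore equivalent to the existence of a constant $\lambda_{\A} > 0$ such that $\lambda_{\min}(M(y)) \geq \lambda_{\A}\norm{y}^{2}$ for every $y\in\Y$, where $\lambda_{\min}$ denotes the smallest eigenvalue of a Hermitian operator.

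By homogeneity, it suffices to restrict attention to unit vectors $y\in\Y$. The function $y \mapsto \lambda_{\min}(M(y))$ is continuous on $\Y$ (each $M(y)$ depends continuously on $y$, and the minimum eigenvalue is a continuous function of a Hermitian operator), and the unit sphere of $\Y$ is compact, so the infimum
\begin{equation}
\lambda_{\A} \defeq \min\bigl\{\lambda_{\min}(M(y)) : y\in\Y,\,\norm{y}=1\bigr\}
\end{equation}
is attained. The entire lemma therefore reduces to showing that $\lambda_{\A} > 0$.

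Suppose toward contradiction that $\lambda_{\A} = 0$, so that there is a unit vector $y^{\ast}\in\Y$ for which $M(y^{\ast})$ is singular. Then one can find a unit vector $x\in\X$ with $\ip{x}{M(y^{\ast})x} = 0$, i.e.,
\begin{equation}
\sum_{k=1}^{N} \abs{\ip{v_k}{y^{\ast}}}^{2}\,\abs{\ip{u_k}{x}}^{2} = 0.
\end{equation}
Since every summand is nonnegative, it follows that for each $k\in\{1,\ldots,N\}$, either $\ip{u_k}{x} = 0$ or $\ip{v_k}{y^{\ast}} = 0$, which is exactly the statement that the nonzero product vector $x\otimes y^{\ast}$ is orthogonal to every element of $\A$. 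This contradicts the unextendability of $\A$, completing the proof.

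The only real ingredient beyond definitions is the observation that inseparability of the orthogonality condition ($\ip{u_k}{x} = 0$ or $\ip{v_k}{y^{\ast}} = 0$) is exactly the product-vector form of orthogonality to $u_k\otimes v_k$; everything else is a compactness/continuity argument, so there is no serious obstacle. I do not expect the constant $\lambda_{\A}$ produced by this argument to be explicit, but explicit values are not needed for the lemma as stated.
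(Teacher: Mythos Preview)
Your argument is correct. The paper does not actually supply its own proof of this lemma; it attributes the result to Terhal and refers the reader to \cite{Terhal2001} for both a proof and a constructive procedure yielding an explicit lower bound on $\lambda_{\A}$. Your compactness argument is a clean, self-contained alternative: rewrite the operator as $M(y)=\sum_k \abs{\ip{v_k}{y}}^2 u_k u_k^{\ast}$, minimize $\lambda_{\min}(M(y))$ over the unit sphere by continuity, and observe that a vanishing minimum would produce a nonzero product vector orthogonal to $\A$, contradicting unextendability. The trade-off is exactly what you anticipated: your proof is shorter and more elementary, but non-constructive, whereas the cited reference produces an explicit $\lambda_{\A}$, which the paper actually exploits in the worked example after Theorem~\ref{thm:upb}. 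For the lemma as stated, your approach suffices.

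One notational quibble: you use $y^{\ast}$ to denote a particular unit vector in $\Y$, but in the paper's conventions (and in the lemma's own statement) $y^{\ast}$ already means the adjoint of $y$. Choose a different symbol (say $y_0$) to avoid confusion.
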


\noindent
A proof of the lemma, as well as a constructive procedure
to calculate a bound on $\lambda_{\A}$, can be found in \cite{Terhal2001}.

\begin{theorem}
  \label{thm:upb}
  Let $\X = \complex^n$ and $\Y=\complex^m$ be complex Euclidean spaces, let
  \begin{equation}
    \A = \{ u_{k}\otimes v_{k} : k = 1, \ldots, N \} \subset \X\otimes\Y
  \end{equation}
  be an unextendable product set, and let $z \in \X\otimes\Y$ be a unit vector
  orthogonal to $\A$.
  Assuming a uniform selection, the probability to correctly discriminate the
  states corresponding to the set $\A \cup \{ z \}$ by a separable measurement,
  assuming a uniform selection of states, is upper-bounded by 
  \begin{equation}
    1 - \frac{\lambda_{\A}}{(N + 1)\delta},
  \end{equation}
  where $\lambda_{\A}$ is a positive real number satisfying the requirements 
  of Lemma~\ref{lemma:lambda} and $\delta = \norm{\tr_{\X}(z z^{\ast})}$.
\end{theorem}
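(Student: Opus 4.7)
The plan is to apply weak duality for the cone program \eqref{eq:dual-problem}: it suffices to exhibit a Hermitian operator $H\in\Herm(\X\otimes\Y)$ with $\tr(H) = 1 - \lambda_\A/((N+1)\delta)$ such that $H - \frac{1}{N+1}\rho \in \BPos(\X:\Y)$ for every state $\rho$ in the ensemble $\A\cup\{zz^{\ast}\}$. Writing $\Pi_\A = \sum_{k=1}^N u_k u_k^{\ast}\otimes v_k v_k^{\ast}$, I would try the ansatz
\begin{equation*}
  H = \frac{1}{N+1}\Pi_\A + \frac{1}{N+1}\Bigl(1 - \frac{\lambda_\A}{\delta}\Bigr)\,zz^{\ast}.
\end{equation*}
One may assume $\lambda_\A \leq \delta$, since otherwise the claimed bound is no stronger than the trivial $N/(N+1)$ that is obtained by simply ignoring $z$; under this assumption the coefficient of $zz^{\ast}$ is nonnegative and so $H\in\Pos(\X\otimes\Y)$, and the trace computation is immediate.

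For feasibility there are two types of constraints. The $N$ constraints indexed by the product states are essentially free: for each $k\leq N$, the operator $H - \frac{1}{N+1} u_k u_k^{\ast}\otimes v_k v_k^{\ast}$ is a nonnegative combination of the remaining rank-one product operators together with $zz^{\ast}$, and is therefore positive semidefinite and in particular block positive. The one substantive constraint is
\begin{equation*}
  H - \tfrac{1}{N+1}zz^{\ast} \;=\; \tfrac{1}{N+1}\Pi_\A \;-\; \tfrac{\lambda_\A}{(N+1)\delta}\,zz^{\ast} \;\in\; \BPos(\X:\Y),
\end{equation*}
which, by the characterization of block positivity given in Section~\ref{sec:cone}, amounts to showing that
\begin{equation*}
  (\I_\X\otimes y^{\ast})\Pi_\A (\I_\X\otimes y) \;-\; \tfrac{\lambda_\A}{\delta}\,(\I_\X\otimes y^{\ast})zz^{\ast}(\I_\X\otimes y) \;\in\;\Pos(\X)
\end{equation*}
for every $y\in\Y$.

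The heart of the proof is this single inequality, and it is precisely where Lemma~\ref{lemma:lambda} and the definition of $\delta$ combine. Lemma~\ref{lemma:lambda} gives the lower bound $(\I_\X\otimes y^{\ast})\Pi_\A(\I_\X\otimes y)\geq \lambda_\A\|y\|^2\,\I_\X$. On the other hand $(\I_\X\otimes y^{\ast})zz^{\ast}(\I_\X\otimes y)$ is a rank-one positive operator whose operator norm equals $\langle y, \tr_\X(zz^{\ast})\,y\rangle$; since $\tr_\X(zz^{\ast})\leq\delta\,\I_\Y$ by definition of $\delta$, this rank-one operator is in turn bounded above by $\delta\|y\|^2\,\I_\X$. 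Scaling by $\lambda_\A/\delta$ matches the two bounds exactly, so the required block positivity holds.

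Weak duality for cone programs then yields the desired upper bound on the probability of correct discrimination by a separable measurement. I expect no real obstacle in the execution beyond selecting the ansatz: once the balance between the $\Pi_\A$ and $zz^{\ast}$ contributions is calibrated by the ratio $\lambda_\A/\delta$, Terhal's lemma supplies exactly the spectral margin needed to absorb the rank-one perturbation coming from $zz^{\ast}$, and everything else is a trace calculation and a pair of elementary PSD inequalities.
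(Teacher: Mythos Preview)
Your approach is essentially identical to the paper's: the same dual feasible operator $H$, positive semidefiniteness for the $N$ product-state constraints, and the combination of Lemma~\ref{lemma:lambda} with the elementary bound $(\I_{\X}\otimes y^{\ast})zz^{\ast}(\I_{\X}\otimes y) \leq \delta\norm{y}^{2}\,\I_{\X}$ for the $z$-constraint. One small point: your justification for assuming $\lambda_{\A} \leq \delta$ is backward (when $\lambda_{\A} > \delta$ the claimed bound is \emph{stronger} than $N/(N+1)$, and $N/(N+1)$ is not in general attained by a separable measurement anyway), but the paper's proof makes exactly the same implicit assumption when it asserts $H - \tfrac{1}{N+1}u_{k}u_{k}^{\ast}\otimes v_{k}v_{k}^{\ast} \in \Pos(\X\otimes\Y)$.
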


\begin{proof}
  Consider the following Hermitian operator:
  \begin{equation}
    H = \frac{1}{N+1} \left( \sum_{k = 1}^{N}u_{k}u_{k}^{\ast}\otimes 
    v_{k}v_{k}^{\ast} + \left( 1- \frac{\lambda_{\A}}{\delta} \right) 
    zz^{\ast} \right).
  \end{equation}
  We want to show that $H$ is a feasible solution of the dual problem
  \eqref{eq:dual-problem} for the state discrimination problem under
  consideration.
  It is clear that
  \begin{equation}
    H - \frac{1}{N+1}u_{k}u_{k}^{\ast}\otimes v_{k}v_{k}^{\ast} \in 
    \Pos(\X\otimes\Y) \subset \BPos(\X:\Y)
  \end{equation}
  for $k = 1, \ldots, N$. 
  The remaining constraint left to be checked is the following:
  \begin{equation}
    H - \frac{1}{N+1}zz^{\ast} = 
    \frac{1}{N+1}\left(
    \sum_{k = 1}^{N}u_{k}u_{k}^{\ast}\otimes v_{k}v_{k}^{\ast} -
    \frac{\lambda_{\A}}{\delta} zz^{\ast} \right) \in \BPos(\X:\Y).
  \end{equation}
  Using the fact that
  \begin{equation}
    \label{eq:upb_constraint} 
    \delta\norm{y}^{2}\I_{\X} - 
    \left(\I_{\X} \otimes y^{\ast}\right)zz^{\ast}\left(\I_{\X} \otimes y\right)
    \in \Pos(\X),
  \end{equation} 
  for any $y \in \Y$, together with Lemma \ref{lemma:lambda}, one has that
  \begin{equation}
    \left( \I \otimes y \right)^{\ast}
    \left(\sum_{k = 1}^{N}u_{k}u_{k}^{\ast}\otimes v_{k}v_{k}^{\ast} -
    \frac{\lambda_{\A}}{\delta} zz^{\ast}\right)
    \left( \I \otimes y \right) \in \Pos(\X)
  \end{equation}
  and therefore the constraint \eqref{eq:upb_constraint} is satisfied.
  Finally, it holds that
  \begin{equation}
    \tr(H) = 1 - \frac{\lambda_{\A}}{(N + 1)\delta},
  \end{equation}
  which completes the proof.
\end{proof}

\subsubsection*{Example}

Theorem \ref{thm:upb} allow us to find specific bounds for the probability 
of correctly discriminating certain sets of states.
For instance, here we consider the following unextendable product set in 
$\X\otimes\Y$ for $\X = \Y = \complex^3$, commonly known as the 
\emph{tiles set}:
\begin{equation}
  \begin{array}{llll}
    \ket{\phi_1} = \ket{0}\left(\frac{\ket{0}-\ket{1}}{\sqrt{2}}\right),
    &\ket{\phi_2} = \ket{2}\left(\frac{\ket{1}-\ket{2}}{\sqrt{2}}\right),
    &\ket{\phi_3} = \left(\frac{\ket{0}-\ket{1}}{\sqrt{2}}\right)\ket{2},
    &\ket{\phi_4} = \left(\frac{\ket{1}-\ket{2}}{\sqrt{2}}\right)\ket{0},\\[4mm]
    \multicolumn{4}{c}{\ket{\phi_5} = 
      \frac{1}{3}\left(\ket{0}+\ket{1}+\ket{2}\right)
      \left(\ket{0}+\ket{1}+\ket{2}\right).}
  \end{array}
\end{equation}
For a pure state orthogonal to this set, one may take
\begin{equation}
  \ket{\psi} = \frac{1}{2}\left(\ket{0}\ket{0} + \ket{0}\ket{1} -
  \ket{0}\ket{2} -  \ket{1}\ket{2}\right).
\end{equation}
Using the procedure described in \cite{Terhal2001}, one obtains
\begin{equation}
  \lambda_{\A} \geq \frac{1}{9}\left( 1 - \sqrt{\frac{5}{6}} \right)^{2}.
\end{equation}
Therefore, if we assume that each state is selected with probability $1/6$, 
the maximum probability of correctly discriminating the set 
$\left\{ \ket{\phi_1}, \ldots, \ket{\phi_5}, \ket{\psi} \right\}$ by a
separable measurement is at most
\begin{equation}
  1 - \frac{1}{54}\frac{\left( 1 - \sqrt{\frac{5}{6}} \right)^{2}}
  {\cos\left(\frac{\pi}{8}\right)^{2}} <  1 - 1.647 \times 10^{-4}.
\end{equation}

\section{An optimal bound on discriminating the Yu--Duan--Ying states}
\label{sec:ydy}

In this section we prove a tight bound of $3/4$ on the maximum success
probability for any LOCC measurement to discriminate the set of states 
\eqref{eq:ydy_states} exhibited by Yu, Duan, and Ying \cite{YuDY12},
assuming a uniform selection of states.
The fact that this bound can be achieved by an LOCC measurement is trivial:
if Alice and Bob measure their parts of the states with respect to the standard
basis, they can easily discriminate $\ket{\phi_1}$, $\ket{\phi_2}$, and 
$\ket{\phi_4}$, erring only in the case that they receive $\ket{\phi_3}$.
The fact that this bound is optimal will be proved by exhibiting a 
feasible solution $H$ to the dual problem \eqref{eq:dual-problem},
corresponding to the state discrimination problem at hand, such that
$\tr(H) = 3/4$.

It is convenient for the analysis that follows to make use of the
correspondence between operators and vectors given by the linear function
defined by the action
\begin{equation}
  \op{vec}(\ket{k}\bra{j}) = \ket{k} \ket{j}
\end{equation}
on standard basis vectors.
With respect to this correspondence, the states \eqref{eq:ydy_states} are given
by tensor products of the Pauli operators \eqref{eq:Pauli-operators} as follows:
\begin{equation}
  \ket{\phi_1} = \frac{1}{2}\op{vec}(U_1),\quad
  \ket{\phi_2} = \frac{1}{2}\op{vec}(U_2),\quad
  \ket{\phi_3} = \frac{1}{2}\op{vec}(U_3),\quad\text{and}\quad
  \ket{\phi_4} = \frac{1}{2}\op{vec}(U_4),\quad
\end{equation}
for
\begin{alignat}{2}
  U_1 & = 
  \sigma_0\otimes\sigma_0 = 
  \begin{pmatrix}
    1 & 0 & 0 & 0 \\
    0 & 1 & 0 & 0 \\
    0 & 0 & 1 & 0 \\
    0 & 0 & 0 & 1
  \end{pmatrix}, & 
  U_2 & = 
  \sigma_1\otimes\sigma_1 =
  \begin{pmatrix}
    0 & 0 & 0 & 1 \\
    0 & 0 & 1 & 0 \\
    0 & 1 & 0 & 0 \\
    1 & 0 & 0 & 0
  \end{pmatrix},\\[3mm]
  U_3 & = 
  i \sigma_2 \otimes \sigma_1 =
  \begin{pmatrix}
    0 & 0 & 0 & 1 \\
    0 & 0 & 1 & 0 \\
    0 & -1 & 0 & 0 \\
    -1 & 0 & 0 & 0
  \end{pmatrix}, \qquad & 
  U_4 & = 
  \sigma_3 \otimes \sigma_1 =
  \begin{pmatrix}
    0 & 1 & 0 & 0 \\
    1 & 0 & 0 & 0 \\
    0 & 0 & 0 & -1 \\
    0 & 0 & -1 & 0
  \end{pmatrix}.
\end{alignat}
A feasible solution of the dual problem \eqref{eq:dual-problem} is based on
a construction of block positive operators that correspond, via the Choi
isomorphism, to the family of positive maps introduced by Breuer and Hall
\cite{Breuer06,Hall06}.

\begin{prop}[Breuer--Hall]
  \label{prop:breuer-hall}
  Let $\X = \Y = \complex^n$ and let $U,V\in\Unitary(\Y,\X)$ be unitary
  operators such that $U^{\t}V \in \Unitary(\Y)$ is skew-symmetric:
  $(V^{\t}U)^{\t} = -V^{\t}U$.
  It holds that
  \begin{equation}
    \I_{\X}\otimes\I_{\Y} - \vec(U) \vec(U)^{\ast} - 
    	\pt_{\negsmallspace\X}(\vec(V)\vec(V)^{\ast})
    \in \BPos(\X:\Y).
  \end{equation}
\end{prop}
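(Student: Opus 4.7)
The plan is to verify block positivity directly via the characterization recalled in Section~\ref{sec:cone}: an operator $H \in \Herm(\X\otimes\Y)$ lies in $\BPos(\X:\Y)$ if and only if $(\I_{\X}\otimes y^{\ast}) H (\I_{\X}\otimes y) \in \Pos(\X)$ for every $y \in \Y$. I fix such a $y$ and let $\bar y \in \Y$ denote its entrywise complex conjugate with respect to the standard basis. Using the standard identity $(\I_{\X}\otimes y^{\ast})\vec(A) = A\bar y$, the first two terms of $H$ contract to $\norm{y}^{2}\I_{\X} - (U\bar y)(U\bar y)^{\ast}$. A short coordinate expansion of $\pt_{\negsmallspace\X}(\vec(V)\vec(V)^{\ast}) = \sum_{a,b,c,d} V_{ab}\overline{V_{cd}}\,\ket{c}\ket{b}\bra{a}\bra{d}$, followed by contracting against $y$ on the $\Y$-factor, yields the ``partially conjugated'' rank-one contribution $(\bar V y)(\bar V y)^{\ast}$.

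Writing $\xi = U\bar y$ and $\eta = \bar V y$, the task reduces to showing
\[
\norm{y}^{2}\I_{\X} - \xi\xi^{\ast} - \eta\eta^{\ast} \in \Pos(\X)
\]
for every $y \in \Y$. Because $U$ and $V$ are unitary, $\norm{\xi} = \norm{\eta} = \norm{y}$. The rank-at-most-two operator $\xi\xi^{\ast} + \eta\eta^{\ast}$ has the same nonzero spectrum as the $2\times 2$ Gram matrix of the pair $\{\xi,\eta\}$, whose eigenvalues are $\norm{y}^{2}\pm\abs{\xi^{\ast}\eta}$. The desired inequality therefore holds if and only if $\xi^{\ast}\eta = 0$.

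A direct calculation gives
\[
\xi^{\ast}\eta = (U\bar y)^{\ast}(\bar V y) = y^{\t}\bigl(U^{\ast}\bar V\bigr) y,
\]
and this bilinear form vanishes on all of $\Y$ precisely when the matrix $U^{\ast}\bar V$ is skew-symmetric. Since $(U^{\ast}\bar V)^{\t} = \overline{V^{\t}U}$, the skew-symmetry of $U^{\ast}\bar V$ is equivalent to $(V^{\t}U)^{\t} = -V^{\t}U$, which is exactly the hypothesis.

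The main obstacle is notational bookkeeping: aligning the conventions for $\vec$, partial transpose, and entrywise complex conjugation so that the correct flavor of skew-symmetry condition---on $V^{\t}U$ rather than some cousin like $V^{\ast}U$ or $U^{\t}V$---emerges at the end. Once the partial-transpose term is recognized as the rank-one operator $(\bar V y)(\bar V y)^{\ast}$ after the contraction, the geometric heart of the argument---that a sum of two rank-one projections of common norm $\norm{y}$ is bounded above by $\norm{y}^{2}\I_{\X}$ exactly when the underlying vectors are orthogonal---is immediate from the $2\times 2$ eigenvalue computation.
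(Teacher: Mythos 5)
Your proposal is correct and follows essentially the same route as the paper's proof: contract with $\I_{\X}\otimes y^{\ast}$, identify the two rank-one contributions $U\bar y$ and $\bar V y$, and use the skew-symmetry hypothesis to show these vectors are orthogonal (the vanishing of a bilinear form $y^{\t}My$ for skew-symmetric $M$, which is the paper's symmetric-versus-skew-symmetric inner product observation). The only cosmetic difference is the final step, where you bound $\xi\xi^{\ast}+\eta\eta^{\ast}$ via the eigenvalues of the $2\times 2$ Gram matrix while the paper simply notes that the sum of two orthogonal rank-one projections is a projection.
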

\begin{proof}
  For every unit vector $y\in\Y$, one has
  \begin{multline} \label{eq:BH}
    \qquad
    (\I_{\X} \otimes y^{\ast})
    (\I_{\X}\otimes\I_{\Y} - \vec(U) 
    \vec(U)^{\ast} - \pt_{\negsmallspace\X}(\vec(V)\vec(V)^{\ast}))
    (\I_{\X} \otimes y) \\
    = \I_{\X} - U \overline{y} y^{\t} U^{\ast} - \overline{V}y
    y^{\ast}V^{\t}.
    \qquad
  \end{multline}
  As it holds that $V^{\t}U$ is skew-symmetric, we have
  \begin{equation}
    \bigip{\overline{V}y}{U\overline{y}}
    = y^{\ast} V^{\t}U \overline{y} 
    = \bigip{y y^{\t}}{V^{\t}U}
    = 0,
  \end{equation}
  as the last inner product is between a symmetric and a skew-symmetric
  operator.
  Because $U$ and $V$ are unitary, it follows that
  $U\overline{y}y^{\t}U^{\ast} + \overline{V}yy^{\ast}V^{\t}$ is a rank two
  orthogonal projection, so the operator represented by \eqref{eq:BH} is also a
  projection and is therefore positive semidefinite.
\end{proof}

\begin{remark}
  The assumption of Proposition \ref{prop:breuer-hall} requires $n$ to be even,
  as skew-symmetric unitary operators exist only in even dimensions.
\end{remark}

Now, for one of the four states $\rho_1 = \ket{\phi_1}\bra{\phi_1}$,
$\rho_2 = \ket{\phi_2}\bra{\phi_2}$,
$\rho_3 = \ket{\phi_3}\bra{\phi_3}$, or
$\rho_4 = \ket{\phi_4}\bra{\phi_4}$
drawn with uniform probabilities $p_{1} = \cdots = p_{4} = 1/4$, one has that
the following operator is a feasible solution to the dual problem
\eqref{eq:dual-problem}:
\begin{equation}
  \label{eq:H_2}
  H = \frac{1}{16}(\I_{\X}\otimes\I_{\Y} - 
  \pt_{\negsmallspace\X}(\vec(V)\vec(V)^{\ast}))
\end{equation}
for
\begin{equation}
  V = i \sigma_2 \otimes \sigma_3
  = \begin{pmatrix}
    0 & 0 & 1 & 0\\
    0 & 0 & 0 & -1\\
    -1 & 0 & 0 & 0\\
    0 & 1 & 0 & 0
  \end{pmatrix}.
\end{equation}
Due to Proposition~\ref{prop:breuer-hall}, the feasibility of $H$ follows from
the condition
\begin{equation}
  (V^{\t} U_k)^{\t} = - V^{\t} U_k,
\end{equation}
which can be checked by inspecting each of the four cases.
It is easy to calculate that $\tr(H) = 3/4$, and so the required bound has been
obtained.

\section{Conclusion}

In this paper we have used techniques from convex optimization, and cone
programming in particular, to study the limitations of separable measurements
for the task of discriminating sets of bipartite state.

Several interesting questions regarding the discrimination of sets of 
bipartite states by means of separable and LOCC measurements remain unsolved.
Among them are the following two questions.

\begin{itemize}
\item 
  In Section \ref{sec:bell} we proved a tight bound on the entanglement
  cost of discriminating sets of Bell states by means of LOCC protocols. 
  More generally, one could ask how much entanglement it costs to distinguish 
  maximally entangled states in $\complex^{n}\otimes\complex^{n}$.

\item Ghosh et al.~\cite{GhoshKRS04} have shown that orthogonal maximally
  entangled states, which are in canonical form, can always be discriminated,
  by means of LOCC protocols, if two copies of each of the states are provided.
  The question of whether two copies are sufficient to discriminate any set 
  of orthogonal pure states is open even for separable and PPT measurements.
\end{itemize}

The techniques presented in the paper are not intrinsically limited 
to the setting of bipartite pure states---applications of these techniques to
mixed states and multipartite states are topics for possible future work.

\subsection*{Acknowledgments}

AC thanks Jamie Sikora for insightful discussions on cone programming.
SB thanks the Institute for Quantum Computing at the University of Waterloo for
supporting a visit during which some of the research reported in this paper was
done.
We also thank Ajit Iqbal Singh and Sibasish Ghosh for catching a mistake
in the example following Proposition~8 in the journal version of this paper,
which has been corrected in this manuscipt.

SB is supported by DST-SERB project SR/S2/LOP-18/2012;
AC is supported by NSERC, the US Army Research Office, and a David R. Cheriton
Graduate Scholarship;
NJ is supported by NSERC;
VR is supported by NSERC, the US Army Research Office, and a David R. Cheriton
Graduate Scholarship;
JW is supported by NSERC; and
NY is supported by NSERC.

\bibliographystyle{alpha}

\newcommand{\etalchar}[1]{$^{#1}$}

\end{document}